\definecolor{darkblue}{RGB}{0,0,127} 
\definecolor{darkgreen}{RGB}{0,150,0}
\newtheorem{theorem}{Theorem}
\newtheorem{lemma}{Lemma}
\newtheorem{corollary}{Corollary}
\theoremstyle{definition}
\theoremstyle{remark}
\newcommand{\ket}[1]{| #1 \rangle}
\newcommand{\calC}{{\cal C}}
\newcommand{\calG}{{\cal G}}
\newcommand{\calL}{{\cal L}}
\newcommand{\calO}{{\cal O}}
\newcommand{\calR}{{\cal R}}
\newcommand{\ZZ}{{\mathbb{Z}}}
\newcommand{\be}{\begin{equation}}
\newcommand{\ee}{\end{equation}}
\newcommand{\REP}{\mathrm{REP}}
\begin{document}


\title{Tailoring Surface Codes for Highly Biased noise}
\begin{textblock*}{3cm}(-10cm,-1.5cm)
   YITP-19-68
\end{textblock*}



\author{David K. Tuckett}
\affiliation{Centre for Engineered Quantum Systems, School of Physics, The University of Sydney, Sydney, New South Wales 2006, Australia}
\author{Andrew S. Darmawan}
\affiliation{Yukawa Institute for Theoretical Physics (YITP), Kyoto University, Kitashirakawa Oiwakecho, Sakyo-ku, Kyoto 606-8502, Japan}
\affiliation{JST, PRESTO, 4-1-8 Honcho, Kawaguchi, Saitama 332-0012, Japan}
\author{Christopher T. Chubb}
\affiliation{Centre for Engineered Quantum Systems, School of Physics, The University of Sydney, Sydney, New South Wales 2006, Australia}
\author{Sergey Bravyi}
\affiliation{IBM T.J. Watson Research Center, Yorktown Heights, New York 10598, USA}
\author{Stephen D. Bartlett}
\affiliation{Centre for Engineered Quantum Systems, School of Physics, The University of Sydney, Sydney, New South Wales 2006, Australia}
\author{Steven T. Flammia}
\affiliation{Centre for Engineered Quantum Systems, School of Physics, The University of Sydney, Sydney, New South Wales 2006, Australia}
\affiliation{Yale Quantum Institute, Yale University, New Haven, Connecticut 06520, USA}


\date{12 November 2019}

\begin{abstract}
The surface code, with a simple modification, exhibits ultrahigh error-correction thresholds when the noise is biased toward dephasing.
Here, we identify features of the surface code responsible for these ultrahigh thresholds.  We provide strong evidence that the threshold error rate of the surface code tracks the hashing bound exactly for all biases and show how to exploit these features to achieve significant improvement in logical failure rate.
First, we consider the infinite bias limit, meaning pure dephasing.
We prove that the error threshold  of the modified surface code for pure dephasing noise is $50\%$, i.e., that all qubits are fully dephased, and this threshold can be achieved by a polynomial-time decoding algorithm.
We demonstrate that the subthreshold behavior of the code depends critically on the precise shape and boundary conditions of the code.  That is, for rectangular surface codes with standard rough and smooth open boundaries, it is controlled by the parameter $g=\gcd(j,k)$, where $j$ and $k$ are dimensions of the surface code lattice.
We demonstrate a significant improvement in logical failure rate with pure dephasing for \textit{coprime} codes that have $g=1$, and closely-related \textit{rotated} codes, which have a modified boundary.
The effect is dramatic: The same logical failure rate achievable with a square surface code and $n$ physical qubits can be obtained with a coprime or rotated surface code using only $O(\sqrt{n})$ physical qubits.
Finally, we use approximate maximum-likelihood decoding to demonstrate that this improvement persists for a general Pauli noise biased toward dephasing.
In particular, comparing with a square surface code, we observe a significant improvement in logical failure rate against biased noise using a rotated surface code with approximately half the number of physical qubits.
\end{abstract}

\pacs{}

\maketitle


\section{\label{sec:introduction}Introduction}
Quantum error-correcting codes are expected to play a fundamental role in enabling quantum computers to operate at a large scale in the presence of noise.
The surface code~\cite{Bravyi1998}, an example of a topological stabilizer code~\cite{Terhal2015}, is one of the most studied and promising candidates, giving excellence performance for error correction while requiring only check operators (stabilizers) acting on a small number of neighboring qubits~\cite{Dennis2002}.

The error-correction threshold of a code family, which denotes the physical error rate below which the logical failure rate can be made arbitrarily small by increasing the code size, is strongly dependent on the noise model.
The most commonly studied noise model is uniform depolarization of all qubits, where independent single-qubit Pauli $X$, $Y$, and $Z$ errors occur at equal rates.
However, in many quantum architectures such as certain superconducting qubits~\cite{Aliferis2009}, quantum dots~\cite{Shulman2012}, and trapped ions~\cite{Nigg2014}, among others, the noise is biased toward dephasing, meaning that $Z$ errors occur much more frequently than other errors.
Recently, it was shown that, with a simple modification, the surface code exhibits ultrahigh thresholds with such $Z$-biased noise~\cite{Tuckett2018}, where bias is defined as the ratio of the probability of a high-rate $Z$ error over the probability of a low-rate $X$ or $Y$ error.

\begin{figure}[ht]
  \includegraphics{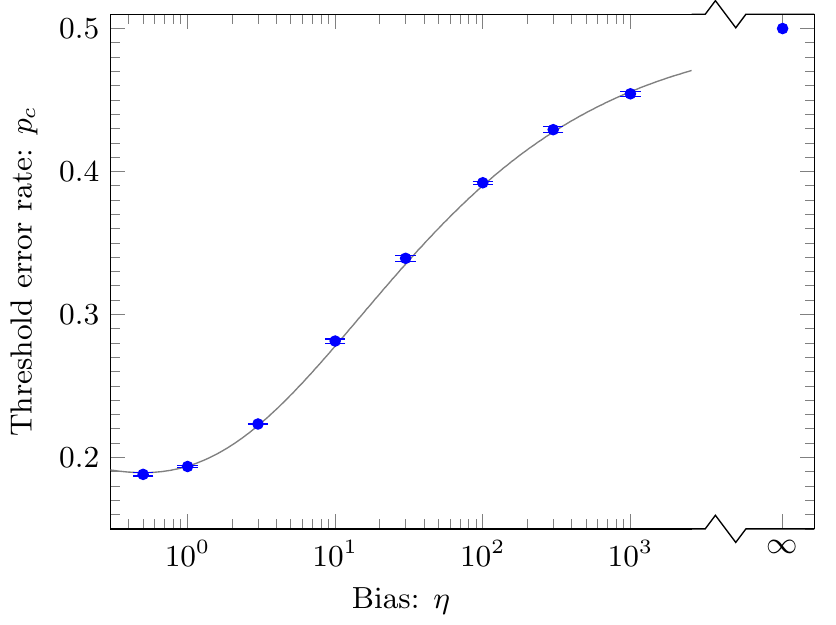}
  \caption{\label{fig:threshold-v-bias}
    Threshold error rate $p_c$ as a function of bias $\eta$.
    Points show threshold estimates for the surface code.
    Error bars indicate one standard deviation relative to the fitting procedure.
    The point at the smallest bias corresponds to $\eta=0.5$ or standard depolarizing noise.
    The point at infinite bias indicates the analytically proven $50\%$ threshold value.
    The gray line is the hashing bound for the associated Pauli error channel.
  }
\end{figure}

\begin{figure*}
  \includegraphics{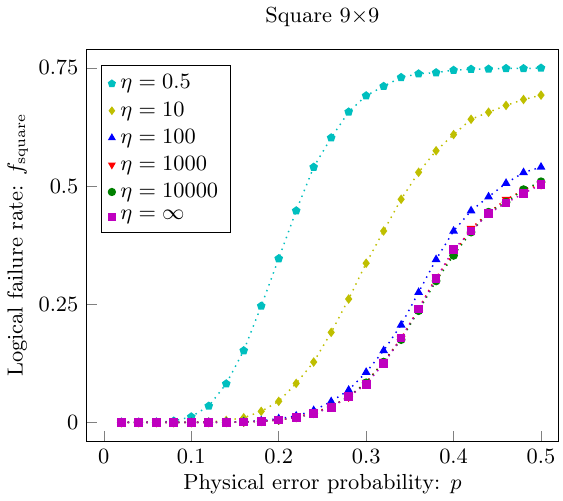}
  \includegraphics{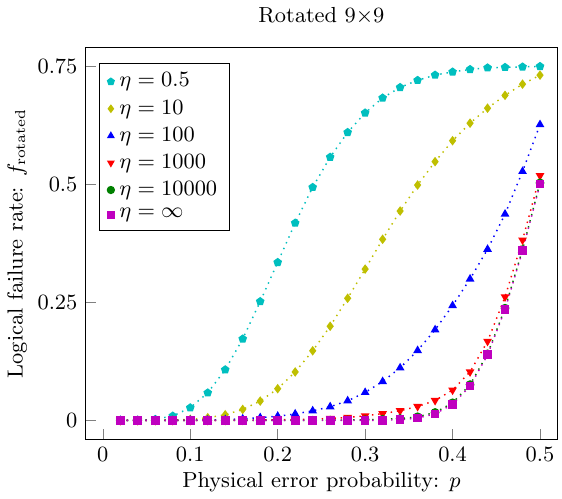}
  \includegraphics{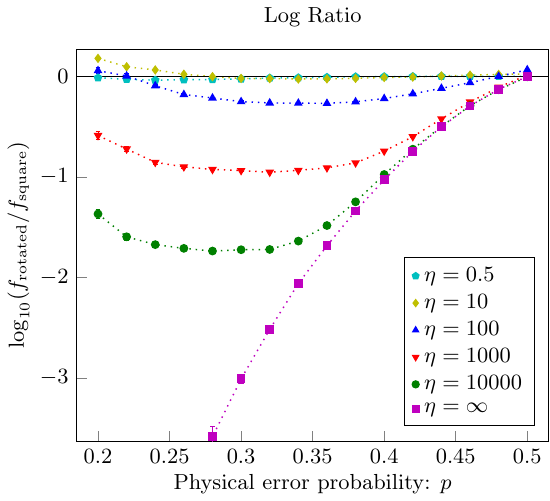}
  \caption{\label{fig:coprime-v-square-small-code-comparison}
    Logical failure rates $f_{\text{square}}$ and $f_{\text{rotated}}$ as a function of physical error probability $p$ for small comparable square and rotated $9{\times}9$ codes and the logarithm of the ratio of logical failure rates $\log_{10}(f_{\text{rotated}} / f_{\text{square}})$ with noise biases $\eta \in \{0.5, 10, 100, 1000, 10\,000, \infty\}$.
    Error bars indicate one standard deviation.
    Data points are sample means over 30\,000 and 1\,200\,000 runs for the square and rotated codes, respectively, using approximate maximum-likelihood decoding converged to within half a standard deviation for both codes.
    Dotted lines connect successive data points for a given $\eta$.
  }
\end{figure*}

In this paper, we identify and characterize the features of the noise-tailored surface code that contribute to its ultrahigh thresholds with $Z$-biased noise and demonstrate a further significant improvement in logical failure rate.
We note that the modification of the surface code, described in Ref.~\cite{Tuckett2018}, simply exchanges the roles of $Z$ and $Y$ operators in stabilizer and logical operator definitions.
Therefore, results for the modified surface code with $Z$-biased noise can equivalently be expressed in terms of the unmodified surface code and $Y$-biased noise, where $Y$ errors occur more frequently than $X$ or $Z$ errors.
In order to frame our analysis in the context of the familiar unmodified surface code and to simplify comparison with other codes, we consider pure $Y$ noise and $Y$-biased noise on the surface code, with $X$- and $Z$-parity checks, throughout this paper.
However, we emphasize that our results apply equally to the modified surface code with pure $Z$ noise or the $Z$-biased noise prevalent in many quantum architectures.

Our main numerical result is to demonstrate that the threshold error rate of the tailored surface code saturates the hashing bound for all biases.
While the numerical results of Ref.~\cite{Tuckett2018} indicate that the threshold error rate of the tailored surface code approaches the hashing bound for low to moderate bias, the threshold estimates fall short for higher and infinite bias.
Using a tensor-network decoder that converges much more strongly with biased noise, we significantly improve on the results of Ref.~\cite{Tuckett2018}.
Our new results are summarized in Fig.~\ref{fig:threshold-v-bias}, providing strong evidence that the hashing bound can be achieved with a tailored surface code.

Our main analytical result is a structural theorem that reveals a hidden concatenated form of the surface code.
We show that, in the limit of pure $Y$ noise, the surface code
can be viewed as a classical concatenated code with two concatenation levels.
The top level contains the so-called cycle code whose parity checks correspond to
cycles in the complete graph. The bottom level contains several
copies of the repetition code. We prove that the cycle code
has an error threshold of $50\%$ and give
an efficient decoding algorithm that achieves this threshold.
As a corollary, we show that the threshold of the surface code with pure $Y$ noise is $50\%$,
thus answering an open question posed in Ref.~\cite{Tuckett2018}.
The concatenated structure described above  is controlled by the parameter $g=\gcd(j,k)$,
where $j$ and $k$ are dimensions of the surface code lattice.
In particular, the top-level cycle code has length $O(g^2)$,
while the bottom-level repetition codes have length $O(jk/g^2)$. 
Two important special cases  are {\em coprime codes} 
and {\em square codes} that have $g=1$ and $g=j=k$, respectively.
Informally, a coprime surface code 
can be viewed as a repetition code, whereas a square
surface code can be viewed as a cycle code (in the limit of pure $Y$ noise).
We also show that a closely-related family of surface codes called \emph{rotated} codes (defined by boundaries formed at \ang{45} relative to the standard surface code family) can also be seen as repetition codes against pure $Y$ noise.
Although the repetition and the cycle codes both have a $50\%$ error threshold,
we argue that the former performs much better in the subthreshold regime.
This result suggests that coprime and rotated surface codes may have an intrinsic advantage in correcting strongly biased noise.

We present further insights into the origins of the ultrahigh thresholds by investigating the form of logical operators.
We show that logical operators consistent with pure $Y$ noise are much rarer and heavier than those 
consistent with pure $X$ or $Z$ noise, and their structure 
depends strongly on the parameter $g$.
In particular, there are $2^{g-1}$ $Y$-type logical operators of which the minimum weight is $(2g-1)(jk/g^2)$, which compares to $2^{j(k-1)}$ $X$-type logical operators of which the minimum weight is $j$.
In the case of coprime codes, there is only one $Y$-type logical operator, and its weight is $jk$.
Hence, the distance of coprime codes to pure $Y$ noise is $O(n)$, whereas for square codes it is $O(\sqrt{n})$.
We extend these results to rotated surface codes.
We find that rotated codes, with odd linear dimensions, have similar features to coprime codes; in particular, they admit only one $Y$-type logical operator, and its weight is $n$.
This result is a further improvement over coprime codes, since rotated surface codes are, in a sense, optimal~\cite{Bombin2007}. That is, they achieve the same distance as standard surface codes with approximately half the number of physical qubits.

Leveraging features of the structure of rotated codes with pure $Y$ noise, we develop a tensor-network decoder that achieves much more strongly converged decoding with $Y$-biased noise compared with the decoder in Ref.~\cite{Bravyi2014} and exact maximum-likelihood decoding in the limit of pure $Y$ noise.

We perform numerical simulations, using exact maximum-likelihood decoding to confirm the $50\%$ threshold for the surface code with pure $Y$ noise and demonstrate a significant reduction in logical failure rate for coprime and rotated codes compared to square codes with pure $Y$ noise.
In particular, we demonstrate that the logical failure rate decays exponentially with the distance to pure $Y$ noise such that a target logical failure rate may be achieved with quadratically fewer physical qubits by using coprime or rotated codes compared with standard (square) surface codes.

Finally, we demonstrate a remarkable property of surface codes: By \emph{removing} approximately half the physical qubits from a square code to yield a rotated code with the same odd linear dimensions, we observe a significant reduction in logical failure rate with biased noise.
Specifically, we perform numerical simulations, using strongly converged approximate maximum-likelihood decoding, to demonstrate the aforementioned significant reduction in logical failure rate against biased noise that is achieved using a rotated $j{\times}j$ code, containing $n=j^2$ physical qubits, compared to a square $j{\times}j$ code, containing $n=2j^2-2j+1$ physical qubits.
Figure~\ref{fig:coprime-v-square-small-code-comparison} summarizes this result, comparing logical failure rate as a function of physical error probability for a rotated $9{\times}9$ code ($81$ qubits) and a square $9{\times}9$ code ($145$ qubits) across a range of biases.
We see that the advantage of the rotated code over the square code is greatest in the limit of pure $Y$ noise ($\eta=\infty$) and remains significant down to a more modest bias, $\eta=100$ (where $Y$ errors are $100$ times more likely than both $X$ and $Z$ errors).
We further argue that, for a given bias, the relative advantage of (odd) rotated codes over square codes increases with code size, until low-rate errors become the dominant source of logical failure and high-rate errors are effectively suppressed, motivating the search for efficient near-optimal biased-noise decoders for rotated codes.

Note that this performance with biased noise is not shared by all topological codes; in stark contrast, the triangular 6.6.6 color code~\cite{Bombin2006} exhibits a decrease in threshold with bias; see Appendix~\ref{sec:color-thresholds}.

The paper is structured as follows.
Section~\ref{sec:definitions} provides some definitions used throughout the paper.
Our main analytical results for surface codes with pure $Y$ noise are in Sec.~\ref{sec:y-noise-theorems}.
Our numerical results for surface codes with pure $Y$ noise and $Y$-biased noise are in Secs.~\ref{sec:performance-pure} and \ref{sec:performance-biased}, respectively.
Section~\ref{sec:tn-decoding-rotated-codes} defines the tensor-network decoder used in simulations of $Y$-biased noise on rotated codes.
We conclude in Sec.~\ref{sec:discussion} with a discussion of our results in the context of prior work and raise some open questions for future work.
Finally, Appendix~\ref{sec:color-thresholds} gives comparative results for color codes, and Appendix~\ref{sec:y-decoder} defines the exact maximum-likelihood decoder used in simulations of pure $Y$ noise on square and coprime surface codes.

\section{\label{sec:definitions}Definitions}

\paragraph*{Standard surface code.---}
We consider $j{\times}k$ standard surface codes~\cite{Bravyi1998} on a square lattice with ``smooth'' top and bottom boundaries and ``rough'' left and right boundaries.
Physical qubits are associated with edges on the lattice.
Following the usual convention, stabilizer generators consist of $X$ operators on edges around vertices, $A_v = \prod_{e \in v} X_e$, and $Z$ operators on edges around plaquettes, $B_p = \prod_{e \in p} Z_e$.
The stabilizer group is, therefore, $\mathcal{G} = \langle A_v, B_p \rangle$.
Up to multiplication by an element of $\mathcal{G}$, the $\overline{X}$ ($\overline{Z}$) logical operator consists of $X$ ($Z$) operators along the left (top) edge, such that $\overline{X}, \overline{Z} \in \mathcal{C(G) \setminus  G}$ and $\overline{X}\overline{Z} = -\overline{Z}\overline{X}$, where $\mathcal{C(G)} = \{ f \in \mathcal{P} : fg=gf\ \forall\ g \in \mathcal{G} \}$ is the centralizer of $\mathcal{G}$ and $\mathcal{P}$ is the group of $n$-qubit Paulis.
As such, a $j{\times}k$ surface code encodes one logical qubit into $n=2jk-j-k+1$ physical qubits with distance $d=\min(j,k)$.
Figure~\ref{fig:surface-code} illustrates a $4{\times}5$ surface code.

\begin{figure}[ht]
  \includegraphics{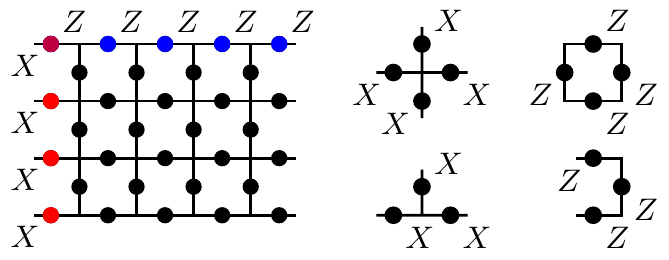}
  \caption{\label{fig:surface-code}
    Standard $4{\times}5$ surface code, with logical operators given by a product of $X$ along the left edge and a product of $Z$ along the top edge.
    Stabilizer generators are shown at the right.
  }
\end{figure}

\paragraph*{Rotated surface code.---}
We also consider rotated surface codes, which are defined by drawing the boundary at \ang{45} relative to the standard surface code lattice~\cite{Bombin2007}; see Fig.~\ref{fig:rotated-code}(a).
As with standard codes, stabilizer generators consist of $X$ ($Z$) operators on edges around vertices (plaquettes), with these restricted to two qubits on the boundaries.
The $\overline{X}$ ($\overline{Z}$) logical operator consists of $X$ ($Z$) operators along the northeast (northwest) edge.
The rotated code is usually, and equivalently, depicted as in Fig.~\ref{fig:rotated-code}(b), where shaded and blank faces correspond to $X$- and $Z$-type stabilizer generators, respectively.
As such, a rotated $j{\times}k$ surface code encodes one logical qubit into $n=jk$ physical qubits with distance $d=\min(j,k)$. Unless otherwise stated, we consider rotated surface codes with $j$ and $k$ odd.

\begin{figure}[ht]
  \includegraphics{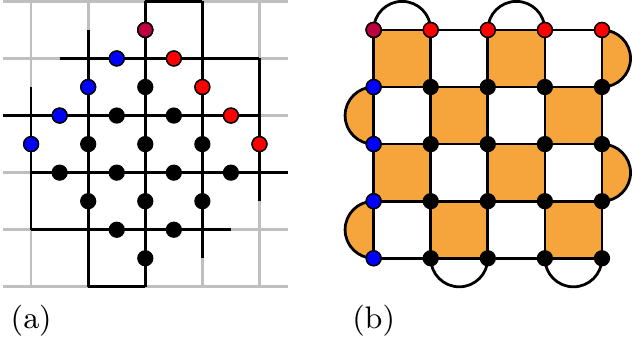}
  \caption{\label{fig:rotated-code}
    (a) Rotated $5{\times}5$ surface code defined by drawing the boundary at \ang{45} relative to the surface code lattice.
    Logical operators are given by a product of $X$ along the northeast edge and $Z$ along the northwest edge.
    As with the standard code, stabilizer generators consist of $X$ ($Z$) operators on edges around vertices (plaquettes).
    (b) Rotated $5{\times}5$ surface code as it is usually, and equivalently, depicted, where shaded (blank) faces corresponding to $X$-type ($Z$-type) stabilizer generators.
  }
\end{figure}

\paragraph*{Surface code families ---}
For standard $j{\times}k$ surface codes, we define the following code families: \emph{square} where $j = k$; $\gcd(j, k)=g$ const; and \emph{coprime} where $g{=}1$ (special case of $g$ constant).
In addition, for rotated $j{\times}k$ surface codes, we define the family of \emph{rotated} codes with $j$ and $k$ odd.

\paragraph*{$Y$-type stabilizers and logical operators.---}
We define a $Y$-type stabilizer to be any operator on a code that is in the stabilizer group $\mathcal{G}$ and consists only of $Y$ and identity single-qubit Paulis.
We define a $Y$-type logical operator to be any operator on a code that is in $\mathcal{C(G) \setminus G}$ and consists only of $Y$ and identity single-qubit Paulis.
We define $X$- and $Z$-type stabilizers and logical operators analogously.
As usual, the weight of an operator is the number of nonidentity single-qubit Paulis applied by the operator.

\paragraph*{$Y$-distance.---}We define $Y$-distance, or distance $d_Y$ to pure $Y$ noise, of a code as the weight of the minimum-weight $Y$-type logical operator.
$X$- and $Z$-distance are defined analogously.
The overall distance of the code is defined in the usual way and is upper bounded by $\min(d_X, d_Y, d_Z)$.

\paragraph*{$Y$-biased noise.---}
Several conventions have previously been used to define biased Pauli noise models~\cite{Ioffe2007, Sarvepalli2009, LaGuardia2014, Robertson2017, Aliferis2008, Aliferis2009, Stephens2013, Stephens2008, Napp2013, Brooks2013, Li2018, Rothlisberger2012, Xu2018, Webster2015, Tuckett2018}.
We adapt the approach of Ref.~\cite{Tuckett2018} to $Y$-biased noise, by considering an independent, identically distributed Pauli noise model defined by an array $\bm{p}=(1-p,p_X,p_Y,p_Z)$ corresponding to the probabilities of each single-qubit Pauli $I$ (no error), $X$, $Y$, and $Z$, respectively, such that the probability of any error on a single qubit is $p = p_X + p_Y + p_Z$.
We define bias $\eta$ to be the ratio of the probability of a $Y$ error to the probability of a non-$Y$ error such that $\eta = p_Y / (p_X + p_Z)$.
For simplicity, we restrict to the case $p_X = p_Z$.
With this definition $\eta=1/2$ corresponds to standard depolarizing noise with $p_X=p_Y=p_Z=p/3$, and the limit $\eta \to \infty$ corresponds to pure $Y$ noise, i.e., only $Y$ errors with probability $p$.
We define $X$- and $Z$-biased noise analogously.

\section{\label{sec:y-noise-theorems}Features of surface codes with pure Y noise}

In this section, we present our analytical results for surface codes with pure $Y$ noise.
In Secs.~\ref{sec:syndromes}--\ref{sec:y-type-logicals}, we present results for standard surface codes, and, in Sec.~\ref{sec:rotated-codes}, we relate these results to rotated surface codes.
We first highlight the specificities of syndromes of pure $Y$ noise.
Our main result reveals that error correction with the standard surface code with pure $Y$ noise is equivalent to a concatenation of two classical codes: the repetition code at the bottom level and the cycle code at the top level.
As a corollary, we show that the surface code with pure $Y$ noise has a threshold of $50\%$.
We also highlight that, for standard $j{\times}k$ surface codes with small $g = \gcd(j,k)$, the more effective repetition code dominates the performance of the code.
We then give explicit formulas for the minimum weight and count of $Y$-type logical operators.
Finally, we relate these results to rotated surface codes.
These results explain the origins of the ultrahigh thresholds of the surface code with $Y$-biased noise, as seen in Ref.~\cite{Tuckett2018} and improved in Sec.~\ref{sec:thresholds-biased}, as well as the lower logical failure rates seen with coprime and rotated surface codes, presented in Secs.~\ref{sec:co-prime-advantage-pure} and \ref{sec:co-prime-advantage-biased}.

\subsection{Syndromes of pure Y noise}
\label{sec:syndromes}

An obvious feature of $Y$ noise on the surface code is that $Y$ errors anticommute with both $X$- and $Z$-type stabilizer generators, providing additional bits of syndrome information.
For comparison, Fig.~\ref{fig:syndromes} shows a sample of $Y$-error configurations alongside identically placed $X$- and $Z$-error configurations with corresponding anticommuting syndrome locations for each error type.
In each case, we see that $Y$-error strings anticommute with more syndrome locations than $X$- or $Z$-error strings, providing the decoder with more information about the location of errors to be corrected.

\begin{figure}[ht]
  \includegraphics{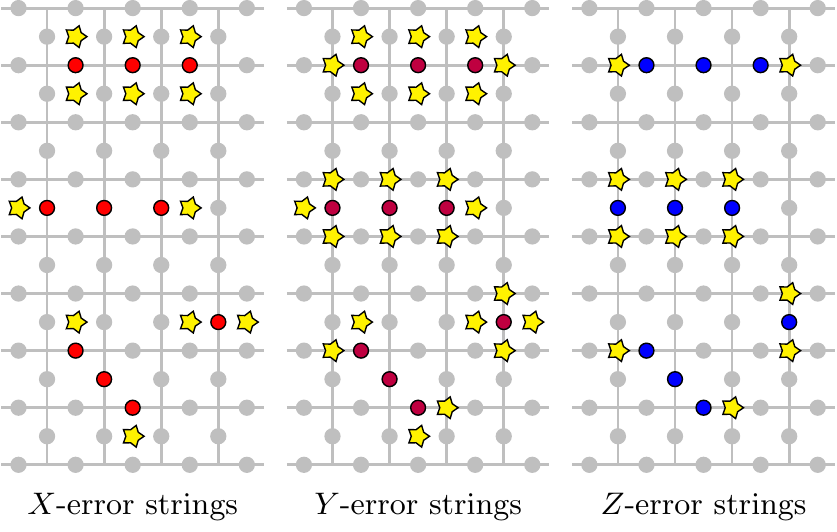}
  \caption{\label{fig:syndromes}
    A sample of $X$-, $Y$-, and $Z$-error strings, indicated by colored circles, with corresponding anticommuting syndrome locations, indicated by yellow stars.
  }
\end{figure}

We remark that the displacement between the $X$- and $Z$-type stabilizer generators appears to be significant.
For example, the color 6.6.6 code has colocated $X$- and $Z$-type stabilizer generators, so that, even if $Y$ errors anticommute with more stabilizer generators, the number of distinct syndrome locations triggered by $Y$ errors is no greater than for $X$ or $Z$ errors.

\subsection{Structure of the standard surface code with pure Y noise}
\label{sec:code-equivalence}

In this section, we consider standard surface codes subject to pure $Y$ noise.
We describe a polynomial-time decoding algorithm and prove that it
achieves an error threshold of $50\%$.
We also derive an exponential upper
bound on the probability of logical errors in the subthreshold regime.
Our main result is a structural theorem that reveals a hidden concatenated
structure of the surface code
and highlights the role of the parameter $g=\gcd{(j,k)}$.
The theorem implies that error correction with the surface code
subject to $Y$ noise
can be viewed as a concatenation of two classical codes:
the repetition code at the bottom level and the so-called cycle
code at the top level. Both codes admit efficient
decoding algorithms and have an error threshold of $50\%$,
although the repetition code scores much better in terms of the logical error probability.
We show that, for a fixed number of qubits,
the size of each code can vary drastically depending on the
value of $g$.
Loosely speaking, the error-correction workload is shared between the two codes
such that for small $g$ the dominant contribution comes from the more effective
repetition code, which explains the enhanced performance of coprime surface codes ($g=1$)
observed in the numerics.

\subsubsection{Concatenated structure}
\label{sec:concatenated-structure}

Consider a Pauli error
\begin{equation}
\label{P(y)}
P(y)\equiv Y_1^{y_1} \otimes Y_2^{y_2} \otimes \cdots \otimes Y_n^{y_n},
\end{equation}
where $y\in \{0,1\}^n$. As described in Sec.~\ref{sec:syndromes},
the syndrome of $P(y)$ is given by
\begin{equation}
\label{ab}
A_v(y)=\sum_{e\in v} y_e  \qquad \mbox{and} \qquad B_p(y)= \sum_{e\in p} y_e
\end{equation}
where $v$ and $p$ run over all vertices and all plaquettes of the lattice
and the sums are modulo two.
A decoding algorithm takes as input the error syndrome
and outputs a candidate recovery operator $P(y')$
that agrees with the observed syndrome.
The decoding succeeds if $y'=y$ and fails
otherwise. [More generally,
the decoder needs to identify only the
equivalence class of errors that contains $P(y)$,
where the equivalence is defined modulo stabilizers of the surface code.]

Consider a classical linear code of length $n$
defined by the parity checks  $A_v(y)=0$ and $B_p(y)=0$ for all $v$ and $p$.
We shall refer to this code as a {\em Y-code}.
As described above, error correction for the surface code
subject to $Y$-noise is equivalent to error correction for the $Y$-code
subject to classical bit-flip errors.
We now establish the structure of the $Y$-code.
For any integer $m\ge 3$,
let $K_m$ be the complete graph with $m$ vertices
and $e=m(m-1)/2$ edges. Consider bit strings
$x\in \{0,1\}^e$ such that bits of $x$ are associated with edges of the
graph $K_m$. Let $x_{i,j}$ be the bit associated with an edge $(i,j)$.
Here it is understood that $x_{i,j}=x_{j,i}$.
Define a {\em cycle code} $\calC_m$ of order $m$ that encodes $m-1$ bits into $e$ bits
with parity checks
\be
\label{parity-checks}
x_{i,j} \oplus x_{j,k} \oplus x_{i,k} = 0 \qquad \mbox{for all $1\le i<j<k\le m$}.
\ee
Thus, parity checks of $\calC_m$ correspond to cycles (triangles) in the graph $K_m$.
Note that Eq.~(\ref{parity-checks}) defines a redundant set of parity checks.
It is well known that any connected graph with $m$ vertices and $e$ edges has
$e-m+1$ independent cycles. Thus, $\calC_m$ has $e-(m-1)$ independent parity checks.
The number of encoded bits is  $m-1$.
Note that $\calC_2$ is a trivial code (it has no parity checks).
Let $\REP{(m)}$ be the repetition code that encodes one bit into $m$ bits.
We can now describe the structure of the $Y$-code.
\begin{theorem}[$Y$-code structure]\label{thm:code-equivalence}
The $Y$-code is a concatenation of the
cycle code $\calC_{g+1}$ at the top level
and $g(g+1)/2$ repetition codes at
the bottom level.  The latter consists of
repetition codes
$\REP{(jk/g^2)}$, $\REP{(2jk/g^2)}$, and
$\REP{(4jk/g^2)}$ with multiplicities
$1$, $2(g-1)$, and $g(g+1)/2 - 2g+1$, respectively.
\end{theorem}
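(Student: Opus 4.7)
The plan is to exhibit the concatenated structure by a change of basis on the parity check matrix of the $Y$-code, separating the constraints into \emph{intra-bundle} (repetition) constraints and \emph{inter-bundle} (cycle) constraints.

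First, I would identify an equivalence relation on the $n$ edges enforced by the parity checks of~(\ref{ab}). For a vertex $v$ and a plaquette $p$ that share a pair of edges, the combination $A_v(y) + B_p(y) = 0 \pmod 2$ eliminates the two shared edges and relates the remaining edges of $v \cup p$. Iterating such elementary combinations traces out \emph{diagonal chains} of edges and produces forced equalities $y_e = y_{e'}$. The transitive closure of these relations partitions the edges into \emph{bundles}, and, by construction, the $Y$-code restricted to any single bundle is exactly the repetition code on that bundle's bits.

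Second, I would count the bundles and their sizes. Diagonal chains in a rectangular surface code behave like billiard trajectories on a $j \times k$ rectangle: they reflect off the smooth top and bottom and the rough left and right boundaries, and eventually close into loops whose lengths are controlled by $\gcd(j,k) = g$. A combinatorial analysis of these closed orbits should yield exactly $g(g+1)/2$ bundles, with one bundle of size $jk/g^2$ coming from a trajectory that closes without reaching two opposite boundaries, $2(g-1)$ bundles of size $2jk/g^2$ from trajectories that reflect off only one pair of boundaries, and $g(g+1)/2 - 2g + 1$ bundles of size $4jk/g^2$ from trajectories bouncing off all four sides.

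Third, I would index the $g(g+1)/2$ bundles by the edges of the complete graph $K_{g+1}$ and identify the residual parity checks with those of the cycle code $\calC_{g+1}$. For each triple $\{a,b,c\} \subset \{1,\ldots,g+1\}$, the goal is to exhibit a specific product of $A_v$'s and $B_p$'s whose support, projected onto the bundle quotient, hits exactly the three bundles labeled $\{a,b\}$, $\{b,c\}$, $\{a,c\}$, each with odd parity. This reproduces the triangular parity checks~(\ref{parity-checks}) defining $\calC_{g+1}$ and completes the concatenation picture.

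The hard part will be the second step: arranging the bundle enumeration so that both the count $g(g+1)/2$ and the sizes $\{jk/g^2,\,2jk/g^2,\,4jk/g^2\}$ emerge with multiplicities $1$, $2(g-1)$, and $g(g+1)/2 - 2g + 1$, respectively. The asymmetry between smooth and rough boundaries makes the reflection rules direction-dependent, and interactions of diagonal chains with the four corners require delicate bookkeeping, probably organized by unfolding the rectangle onto a $2j \times 2k$ torus so that closed trajectories become straight lines whose orbit counts are dictated by $g$. Once the orbit structure is characterized, both the distribution of bundle sizes and their natural indexing by edges of $K_{g+1}$ should follow.
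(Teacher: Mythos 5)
Your overall strategy is dual to the paper's: you work entirely with the parity checks, deriving weight-two relations and a quotient code, whereas the paper works with the code space, exhibiting an explicit spanning set of codewords (the ``diagonals'' $\delta^1,\ldots,\delta^{g+1}$ and their checkerboard extensions $\Delta^i$) together with a reflection symmetry group $\calR\cong\ZZ_2\times\ZZ_2$ under which every codeword is constant on orbits. A check-side argument could in principle work, but your step 2 --- which you correctly identify as the crux --- rests on a misidentification. The closed $45^\circ$ billiard trajectories on the $j\times k$ rectangle are not the repetition-code blocks; they are the \emph{codewords} (the paper's diagonals). The repetition blocks are the orbits of $\calR$, equivalently the pairwise intersections of the $g+1$ trajectories, which is precisely why they are indexed by edges of the complete graph $K_{g+1}$ and why their sizes are $jk/g^2$, $2jk/g^2$ and $4jk/g^2$: per $g\times g$ tile, two generic ``diamond'' trajectories intersect in $4$ edges, a diamond and a corner-to-corner diagonal in $2$, and the two corner-to-corner diagonals in $1$, giving multiplicities $g(g+1)/2-2g+1$, $2(g-1)$ and $1$ among the $\binom{g+1}{2}$ pairs. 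The square case $j=k=g$ already falsifies your picture: there the blocks have sizes $1$, $2$ and $4$, while any closed diagonal trajectory in a $g\times g$ square has length of order $g$. Unfolding to a $2j\times 2k$ torus will correctly enumerate the trajectories, and hence the codewords, but not the blocks.

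Second, even with the correct blocks, your argument only ever establishes memberships in the dual code: step 1 would show the $Y$-code is contained in a direct sum of repetition codes, and step 3 that the quotient code satisfies the triangle checks, i.e.\ is contained in $\calC_{g+1}$. Nothing in the proposal rules out further collapse --- a block forced to zero, two blocks forced equal to each other, or the top-level code being a proper subcode of $\calC_{g+1}$. To prove the theorem you must also exhibit $g$ linearly independent codewords realizing all of $\calC_{g+1}$ on the quotient; this is exactly what the paper's diagonals provide, combined with the observation that any codeword is determined by its restriction to the top row of edges, which caps the dimension. There is also a small imprecision in step 1: in the bulk a vertex and an adjacent plaquette each have four edges and share two, so $A_v+B_p$ is a weight-four relation, not an equality $y_e=y_{e'}$; genuine equalities appear only after chaining such relations all the way to the boundary, where the weight-three and weight-one checks live.
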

An important corollary of the theorem is that a
decoding algorithm for the cycle code
can be directly applied to correcting $Y$ errors in the surface code.
Indeed, a decoder for the $Y$-code can be  constructed in a level-by-level fashion
such that the bottom-level repetition codes are decoded first and
the top-level cycle code is decoded afterwards.

For example, Theorem~\ref{thm:code-equivalence} implies that, with pure $Y$ noise,
a coprime ($g=1$) surface code is essentially a single repetition code of a size growing linearly with $n$, whereas a square surface code is equivalent to the concatenation of bottom-level fixed-size repetition codes
$\REP{(1)}$, $\REP{(2)}$, and
$\REP{(4)}$
 and a top-level cycle code of a size growing linearly with $n$,
where $n$ is the number of physical qubits in the surface code.

\begin{proof}
\begin{figure}[ht]
\includegraphics[width=\linewidth]{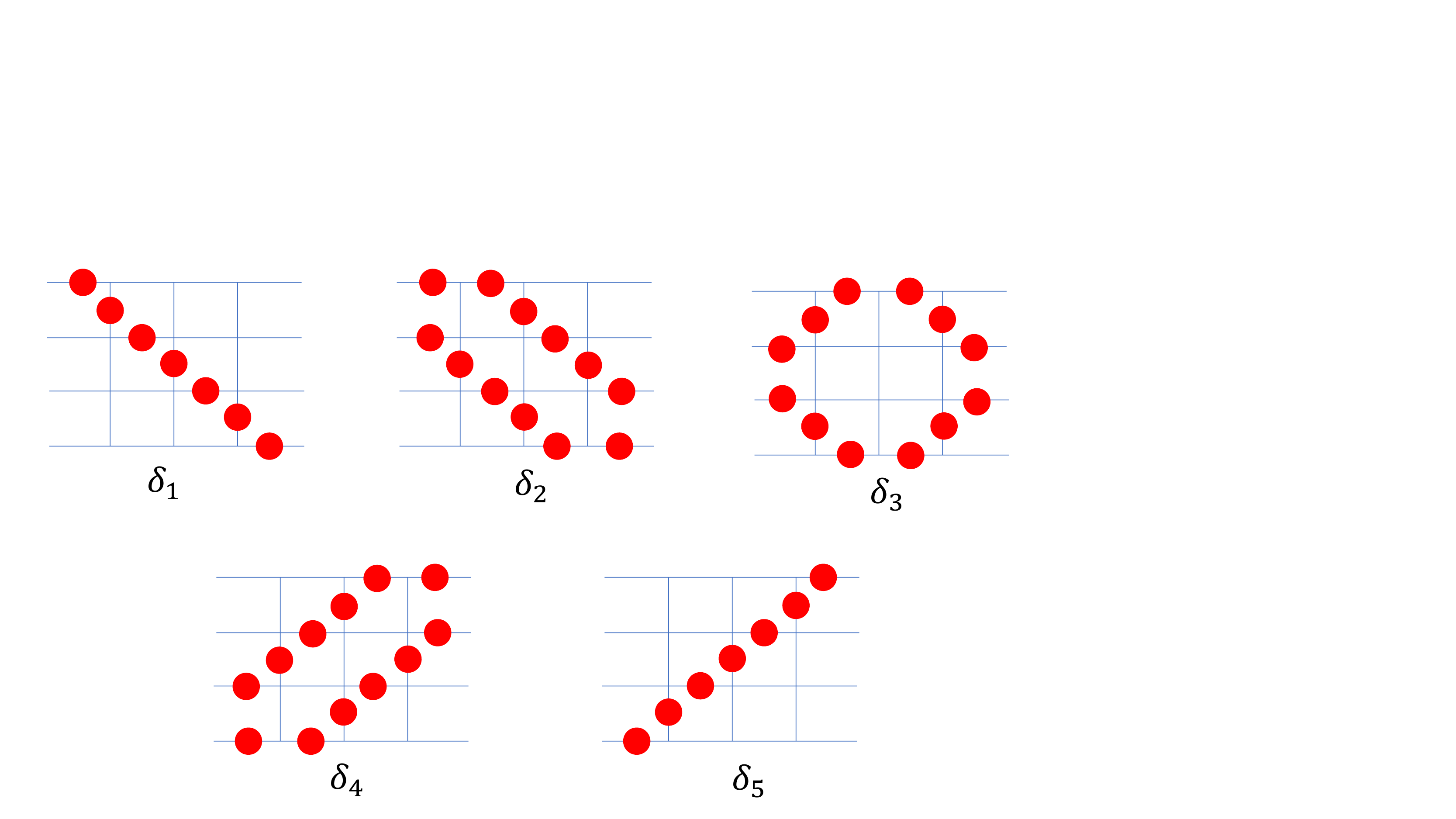}
\caption{Diagonals $\delta^i$ for the $4\times 4$ surface code.
We consider the symmetry group $\calR$ generated
by reflections of the lattice against $\delta^1$ and $\delta^5$.
Note that any diagonal $\delta^i$ is symmetric under
reflections from $\calR$.
\label{fig:diagonals}}
\end{figure}
Let us first prove the theorem in the special case 
of square surface codes,  $j=k=g$.
Let $\calG\subset \{0,1\}^n$ be the code space of the $Y$-code.
We use a particular basis set of codewords
called {\em diagonals}.
The $j\times j$ lattice has $j+1$ diagonals denoted
$\delta^1,\delta^2,\ldots,\delta^{j+1}\in \calG$;
see Fig.~\ref{fig:diagonals}.
Given a codeword $y\in \calG$,
let $\partial y \in \{0,1\}^j$ be the restriction of $y$ onto the top horizontal row of edges
in the surface code lattice.
We claim that $y$  is uniquely determined by $\partial y$.
Indeed, let $H_1,\ldots,H_j$ be the rows of horizontal edges
(counting from the top). Let $V_2,\ldots,V_j$ be the rows
of vertical edges (counting from the top).
By definition, the restriction of $y$ onto $H_1$ coincides with $\partial y$.
Suppose the restriction of $y$ onto $H_1V_2\ldots H_p$ is already determined
(initially $p=1$). Vertex  parity checks $A_v(y)=0$ 
located at the row $H_p$
then determine the restriction of $y$ onto $V_{p+1}$.
Likewise, suppose the restriction of $y$ onto $H_1V_2\ldots H_pV_p$
is already determined. Plaquette parity checks
$B_p(y)=0$ located at the row $V_p$ then determine the restriction of
$y$ onto $H_{p+1}$. Proceeding inductively shows that 
any codeword $y\in \calG$ is uniquely determined by $\partial y$.

Define  bit strings
\[
e^1=100\ldots0, \quad e^2=010\ldots 0, \quad  e^3=001\ldots 0 \quad \mbox{etc}.
\]
Then  $\partial \delta^1=e^1$, $\partial \delta^i = e^{i-1} + e^i$ for
$2\le i \le j$, and $\partial \delta^{j+1}=e^j$; see Fig.~\ref{fig:diagonals}.
It follows that $\partial \delta^1,\ldots,\partial \delta^j$ span the
binary space $\{0,1\}^j$. Accordingly,
the diagonals $\delta^1,\ldots,\delta^j$ span the
code space $\calG$ and
\[
\delta^{j+1}=\delta^1\oplus \delta^2 \oplus  \cdots \oplus  \delta^j.
\]
In particular, $\dim{(\calG)}=j$, that is,
the $Y$-code encodes $j$ bits into $n$ bits.
\begin{figure}[ht]
\includegraphics[width=3cm]{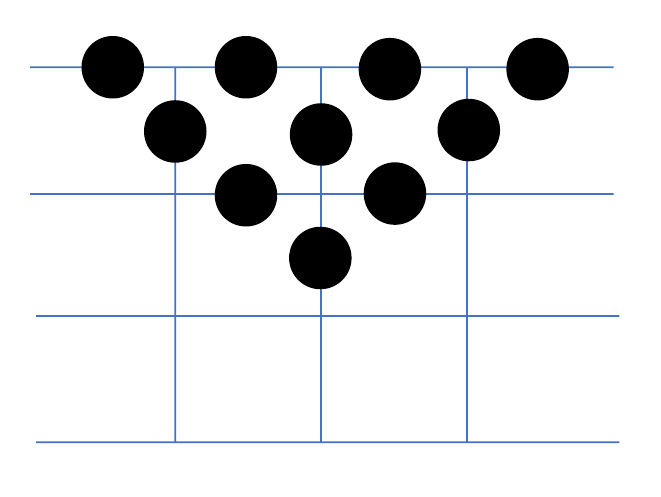}
\caption{A set of qubits $\calO$ such that
each orbit of $\calR$ contains exactly one qubit from $\calO$.
In this example
the group $\calR$
has ten orbits of size 1, 2, and 4.
\label{fig:orbits}}
\end{figure}
\begin{figure}[ht]
\includegraphics[width=\linewidth]{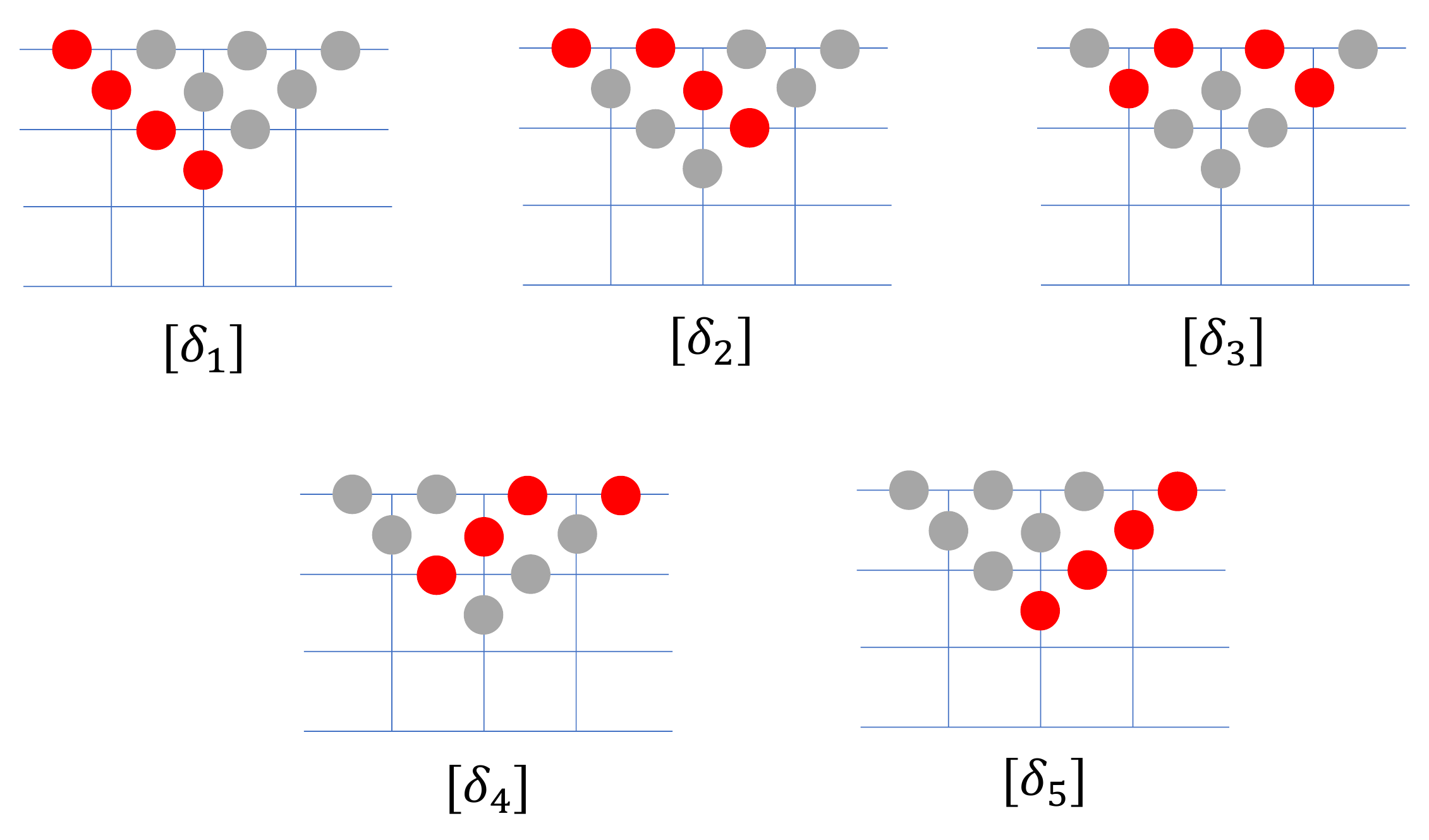}
\caption{Restrictions of the diagonal
$\delta^i$ onto $\calO$ define a basis
set of codewords for the top-level code.\label{fig:diagonals1}}
\end{figure}

Let $\calR\cong \ZZ_2\times \ZZ_2$ be
a group generated by reflections of the lattice
against the diagonals $\delta^1$ and $\delta^{j+1}$.
Note  that
any diagonal $\delta^i$ is invariant under reflections
from $\calR$; see  Fig.~\ref{fig:diagonals}.
Suppose $f$ is an edge of the surface code lattice.
Let $\calR(f)$ be the orbit of $f$ under the action of $\calR$.
The above shows that any diagonal $\delta^i$
is constant on orbits of $\calR$; that is,
$\calR(f)=\calR(g)$ implies that $\delta^i_f=\delta^i_g$.
Since the diagonals $\delta^i$ span the full code space
$\calG$, we conclude that any codeword $y\in \calG$
is constant on orbits of $\calR$; that is,
$\calR(f)=\calR(g)$ implies that $y_f=y_g$.
Equivalently, each orbit of $\calR$
of size $m$  gives rise to the repetition code
$\REP(m)$. A simple counting shows that $\calR$ has
a single orbit of size $1$ (the central vertical edge) and 
$2(j-1)$ orbits of size $2$ (pairs of qubits located on
the diagonals $\delta^1$ and $\delta^{j+1}$),
whereas all remaining orbits have size $4$, which proves the last statement of the theorem
(in the special case $j=k$).

Fix a set of qubits $\calO$  such that
each orbit of $\calR$ contains exactly one qubit from $\calO$.
In other words, $\calO$ is a set of orbit representatives.
We choose $\calO$ as shown
in Fig.~\ref{fig:orbits}.
A simple counting shows that $|\calO|=j(j+1)/2$.
Consider a codeword $y\in \calG$ and let
$[y]\in \{0,1\}^{|\calO|}$ be a vector obtained by
restricting $y$ onto $\calO$.
We define the top-level code as a linear
subspace $\calL \subseteq \{0,1\}^{|\calO|}$
spanned by vectors $[y]$ with $y\in \calG$.
Equivalently, $\calL$ is spanned
by vectors $[\delta^i]$ with $i=1,\ldots,j+1$.
A direct inspection shows that each qubit
$e\in \calO$ belongs to exactly two vectors
$[\delta^i]$ and $[\delta^k]$ for some $i\ne k$;
see Fig.~\ref{fig:diagonals1} for an example.
Thus, one can identify $\calO$ with the set of edges
of the complete graph $K_{j+1}$, whereas
the vectors $[\delta^i]$ can be identified
with ``vertex stabilizers" in $K_{j+1}$.
In other words, the support of each vector $[\delta^i]$
coincides with the set of edges incident to some vertex of $K_{j+1}$.
We conclude that parity checks of $\calL$ correspond to
closed loops in $K_{j+1}$. Thus, the top-level code
coincides with the cycle code $\calC_{j+1}$.

The above proves the theorem in the special case $j=k$.
Consider now the general case $j\ne k$.
Let us tile the surface code lattice by $t=jk/g^2$  tiles
of size $g\times g$ as
shown in Fig.~\ref{fig:tiles}. Note that each horizontal edge 
is fully contained in  some  tile. Let us say that a vertical edge is 
a boundary edge if it overlaps with the boundary of some adjacent tiles.
If one ignores the boundary edges, each tile contains a single copy
of the $g\times g$ surface code.
\begin{figure}[ht]
\includegraphics[width=7cm]{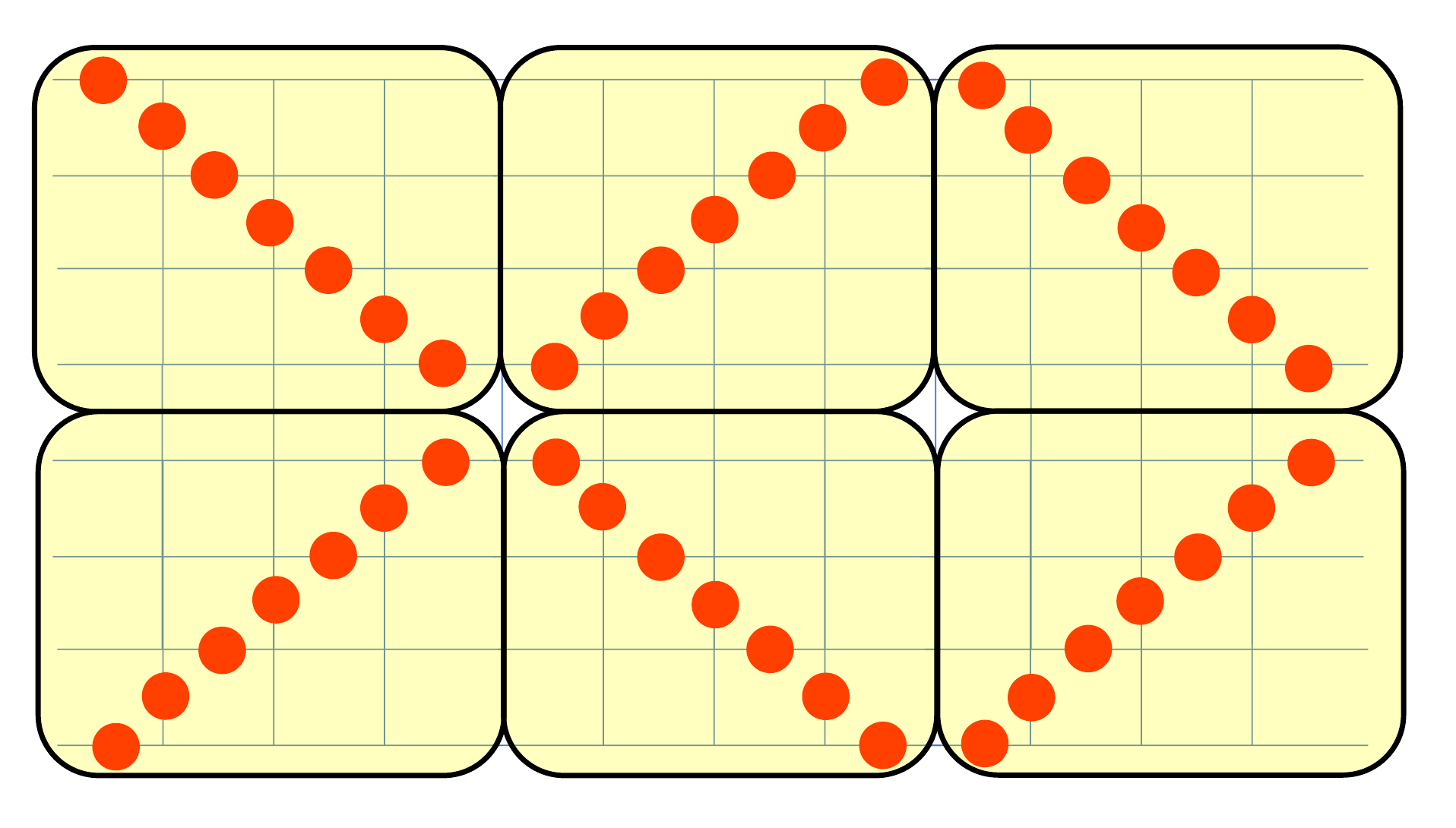}
\caption{Partition of the $8\times 12$ surface code
into $4\times 4$ tiles. Solid red circles:  The extended
diagonal $\Delta^1$ alternating between $\delta^1$ and
$\delta^5$; see Fig.~\ref{fig:diagonals}.
\label{fig:tiles}}
\end{figure}
For each tile, define the diagonals $\delta^1,\delta^2,\ldots,\delta^{g+1}$ as above.
Let $\calG$ be the code space of the $Y$-code for the full $j\times k$ lattice.
Recall that any codeword $y\in \calG$
is fully determined by its projection $\partial y$ onto the top horizontal row of edges.
Using this property, one can easily verify that the code space $\calG$ 
is spanned by ``extended diagonals" $\Delta^i$
such that the restriction of 
$\Delta^i$ onto the top-left tile coincides with $\delta^i$ and
$\Delta^i$ alternates between $\delta^i$ and $\delta^{g+2-i}$
in a checkerboard fashion; see Fig.~\ref{fig:tiles1}.
\begin{figure}[ht]
\includegraphics[width=7cm]{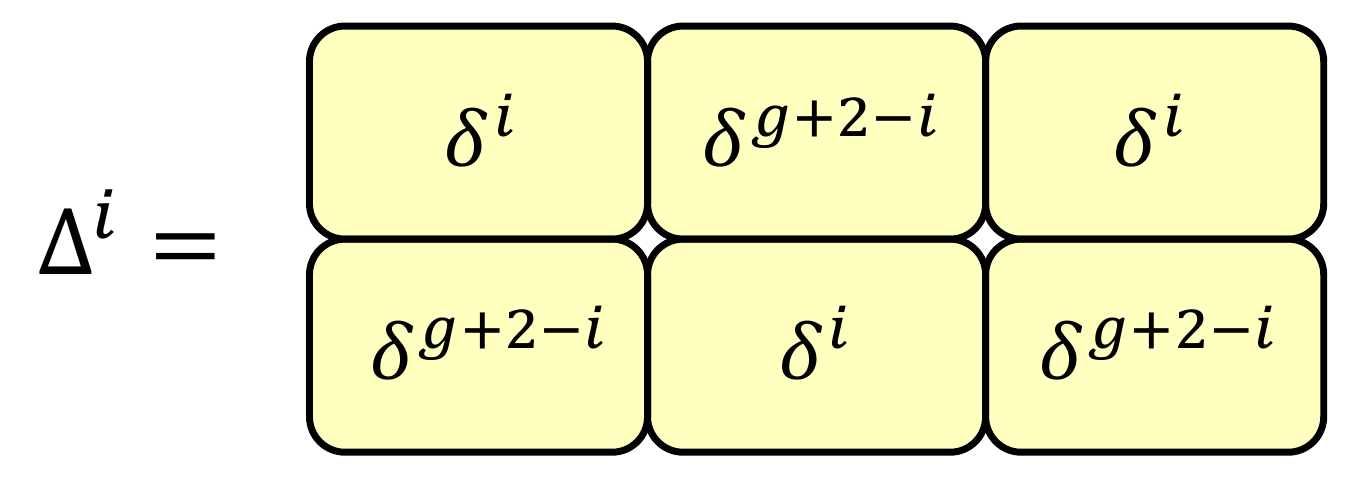}
\caption{Extended diagonal $\Delta^i$.
\label{fig:tiles1}}
\end{figure}
An example of the extended diagonal $\Delta^1$ is shown in Fig.~\ref{fig:tiles}.
By definition, $\Delta^i$ has no support on the boundary edges, which implies that the $Y$-code has a weight-$1$ parity check for each boundary edge.
Ignoring such weight-$1$  checks,
each codeword $\Delta^i$ consists of $t$ copies of the diagonal $\delta^i$ with
some copies being reflected. Considering $t$ copies of each codeword
instead of a single copy is   equivalent to replacing the repetition codes $\REP{(1)}$, $\REP{(2)}$, and 
$\REP{(4)}$ in the above analysis by
$\REP{(t)}$, $\REP{(2t)}$, and 
$\REP{(4t)}$, respectively, where $t=jk/g^2$ is the number of tiles.
\end{proof}

\subsubsection{Decoding the cycle code}
\label{sec:cycle_decoding}

Here, we consider the cycle code subject to random  errors.
We give a polynomial-time decoding algorithm that achieves the error threshold
of $50\%$.
Fix some integer $m\ge 3$
and consider the cycle code $\calC_m$ defined in Sec.~\ref{sec:concatenated-structure}.
Recall that $\calC_m$ has length $n=m(m-1)/2$.
We consider independent and identically distributed (IID) bit-flip errors such that each bit is flipped with probability
$p\in [0,1/2)$. Define an error bias $\epsilon>0$ such that
\begin{equation}
\label{bias}
2p(1-p)=\frac12- \epsilon.
\end{equation}
\begin{lemma}[Cycle code decoder]
\label{lemma:decoder}
Let  $e\in \{0,1\}^n$ be a random IID error with a bias $\epsilon$.
There exists an algorithm
that takes as input the  syndrome of $e$
and outputs a bit string $e'\in \{0,1\}^n$ such that
\be
\label{upper_bound}
\mathrm{Prob}[e'=e] \ge 1- 2m^2 \cdot \exp{(-2\epsilon^2 m)}.
\ee
The algorithm has runtime $O(m^3)$.
\end{lemma}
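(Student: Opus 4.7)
The plan is to build and analyze a simple edge-by-edge majority-vote decoder that exploits the redundancy of the triangle parity checks defining $\calC_m$. The decoder I would define takes as input the syndrome bits $s_{i,j,k} = e_{i,j} \oplus e_{i,k} \oplus e_{j,k}$ for all triangles $\{i,j,k\} \subseteq [m]$, and, for each edge $(i,j)$, sets $e'_{i,j}$ to be the majority of $\{s_{i,j,k} : k \neq i,j\}$, breaking ties arbitrarily. Each edge requires $O(m)$ work, and there are $O(m^2)$ edges, so the total runtime is $O(m^3)$ as claimed.

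For the probabilistic analysis, I would fix an edge $(i,j)$ and observe that $s_{i,j,k} = e_{i,j}$ precisely when $e_{i,k} \oplus e_{j,k} = 0$, which occurs with probability $(1-p)^2 + p^2 = 1 - 2p(1-p) = \tfrac{1}{2} + \epsilon$ by Eq.~\eqref{bias}. The key structural fact I would then exploit is that as $k$ ranges over $[m]\setminus\{i,j\}$, the pairs $\{e_{i,k}, e_{j,k}\}$ are supported on pairwise disjoint edges of $K_m$, so conditioned on $e_{i,j}$ the $m-2$ votes are mutually independent Bernoulli$(\tfrac{1}{2}+\epsilon)$ random variables. Applying Hoeffding's inequality then gives $\mathrm{Prob}[e'_{i,j} \neq e_{i,j}] \le \exp(-2\epsilon^2(m-2))$.

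A union bound over the $n = m(m-1)/2$ edges yields a total failure probability of at most $(m(m-1)/2)\exp(-2\epsilon^2(m-2))$, which is bounded by $2m^2\exp(-2\epsilon^2 m)$ once the multiplicative factor $\exp(4\epsilon^2) \le e$ (valid since $\epsilon \le \tfrac{1}{2}$) is absorbed into the constant. The $50\%$ threshold is then immediate, since $p < \tfrac{1}{2}$ already forces $\epsilon > 0$ and the bound goes to zero as $m \to \infty$. The only nontrivial point to verify is the conditional independence of the votes at a fixed edge; this is a structural property specific to the complete-graph geometry of $\calC_m$, and once it is in hand the rest is a standard Hoeffding-plus-union-bound calculation.
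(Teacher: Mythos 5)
Your proposal is correct and follows essentially the same route as the paper: a per-edge majority vote over the $m-2$ triangles containing that edge, with the conditional independence of the votes (since those triangles pairwise intersect only in the common edge) feeding into Hoeffding's inequality, followed by a union bound over the $m(m-1)/2$ edges and absorption of the $\exp(4\epsilon^2)$ factor into the constant. Nothing further is needed.
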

\begin{proof}
Recall that the cycle code $\calC_m$ is defined on the complete graph
with $m$ vertices such that each bit of $\calC_m$ is located on
some edge $(i,j)$ of the graph.
Let $e_{i,j}$ be the error bit associated with an edge $(i,j)$.
We begin by giving a subroutine that identifies a single error bit $e_{i,j}$.
Without loss of generality, consider the edge $(1,2)$.
This edge is contained in $m-2$ triangles that give rise to syndrome bits
\begin{align}
\label{triangles}
s_3 &=  e_{1,2} \oplus e_{2,3} \oplus e_{3,1}, \nonumber \\
s_4 &=e_{1,2} \oplus e_{2,4} \oplus e_{4,1},  \nonumber \\
&\cdots  \nonumber \\
s_m & =  e_{1,2} \oplus e_{2,m} \oplus e_{m,1}.
\end{align}
Since errors on different edges of each triangle are independent,
the conditional probability distributions of syndromes $s_j$ for a given
error bit $e_{1,2}$ are
\begin{align*}
\mathrm{Prob}[s_j=1|e_{1,2}=0] =  \frac12 -\epsilon, \\
\mathrm{Prob}[s_j=0|e_{1,2}=0] = \frac12  + \epsilon, \\
\mathrm{Prob}[s_j=1|e_{1,2}=1] = \frac12 + \epsilon, \\
\mathrm{Prob}[s_j=0|e_{1,2}=1] = \frac12  -\epsilon.
\end{align*}
Furthermore, since different triangles in Eq.~(\ref{triangles})  intersect
only on the edge $(1,2)$, we have
\be
\mathrm{Prob}[s_3,\ldots,s_m|e_{1,2}]=\prod_{j=3}^m \mathrm{Prob}[s_j|e_{1,2}].
\ee
This equation is an IID distribution of $m-2$ bits
which is $\epsilon$ biased toward $e_{1,2}$.
 Hoeffding's inequality  gives
\[
\mathrm{Prob}[s_3+ \ldots + s_m \ge m/2|e_{1,2}=0]\le 4  \exp{(-2\epsilon^2 m)}
\]
and
\[
\mathrm{Prob}[s_3+ \ldots + s_m \le m/2|e_{1,2}=1]\le 4\exp{(-2\epsilon^2 m)}.
\]
The desired subroutine  outputs $e_{1,2}=0$ if
$s_3+ \ldots + s_m \le m/2$ and $e_{1,2}=1$ otherwise.
Clearly, the above calculations take time $O(m)$.

The full decoding algorithm applies the above subroutine
independently to each edge of the graph learning error
bits one by one.
By the union bound, such an algorithm misidentifies the error with
a probability of at most $2m^2 \exp{(-2\epsilon^2 m)}$
since the complete graph $K_m$ has $m(m-1)/2$ edges.
The overall runtime of the algorithm is $O(m^3)$.
\end{proof}

Note that the decoding algorithm of Lemma~\ref{lemma:decoder}
can be viewed as a single round of the standard belief-propagation algorithm,
which is commonly used to decode classical low-density parity check (LDPC) codes.
Also recall that the cycle code $\calC_m$ has length $n\sim m^2/2$.
Thus, the probability of a logical error in Eq.~(\ref{upper_bound})
decays exponentially with $\sqrt{n}$ [this scaling
is unavoidable, since the cycle code $\calC_m$ has distance $O(m)$].
As a consequence, the proposed decoder
performs very poorly in the small-bias regime. For example,
reducing the error rate from $49\%$ to $1\%$ requires code
length $n\approx 10^{17}$ [here, we use Eq.~(\ref{upper_bound})
as a rough estimate of the logical error probability].
In contrast, the logical error probability of the repetition
code $\REP{(n)}$ decays exponentially with $n$.

\subsection{\label{sec:y-threshold}Threshold of the standard surface code with pure Y noise}

The standard surface code with pure $Y$ noise is equivalent to a concatenation of two classical codes, as shown above, and both of these classical codes have thresholds of $50\%$.
These results lead directly to the fact that the threshold of the surface code with pure $Y$ noise is $50\%$.
Indeed, let us employ the level-by-level decoding strategy such that the bottom-level repetition codes are decoded first.
Assume that the pure $Y$ noise has error rate $p<1/2$.
Then, the $j$th repetition code makes a logical error with probability $p_j\le p<1/2$.
The effective error model for the top-level cycle code is a product of symmetric binary channels with error rates $p_1,\ldots,p_m\le p$, where $m=g(g+1)/2$ is the length of the cycle code.
One can easily verify that the decoder of Lemma~\ref{lemma:decoder} corrects such a random error with a probability given by Eqs.~(\ref{bias}) and (\ref{upper_bound}).
Finally, Theorem~\ref{thm:code-equivalence} implies that each parity check of the repetition or the cycle code is a linear combination (modulo two) of the plaquette and vertex parity checks of Eq.~(\ref{ab}).
The coefficients in this linear combination can be found by solving a suitable system of linear equations in time $O(n^3)$, which 
enables an efficient conversion between the surface code syndrome and the syndromes of the bottom-level and the top-level code.
To conclude, Theorem~\ref{thm:code-equivalence} and Lemma~\ref{lemma:decoder} have the following corollary.
\begin{corollary}[$Y$-threshold]\label{cor:threshold}
  The error-correction threshold for the surface code with pure $Y$ noise is $50\%$.
  This error threshold can be achieved by a polynomial-time decoding algorithm.
\end{corollary}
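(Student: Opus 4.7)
The plan is to deduce the corollary as an immediate composition of the structural Theorem~\ref{thm:code-equivalence} and the cycle-code decoding Lemma~\ref{lemma:decoder}, using a standard level-by-level concatenated decoding strategy. Fix a pure $Y$ error rate $p<1/2$, and, identifying $Y$ errors with classical bit-flip errors via $P(y)\mapsto y$, reduce the problem to correcting IID bit-flip errors of rate $p$ on the $Y$-code. Theorem~\ref{thm:code-equivalence} guarantees that this code is a concatenation of repetition codes $\REP(m_i)$ at the bottom level (with $m_i\in\{jk/g^2,2jk/g^2,4jk/g^2\}$) and the cycle code $\calC_{g+1}$ at the top level, so it suffices to exhibit an efficient decoder for this two-level structure whose logical failure probability tends to zero as $n\to\infty$.

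First I would decode each bottom-level repetition code independently by majority vote on its bits. Since each $\REP(m_i)$ sees an IID bit-flip channel of rate $p<1/2$, majority decoding succeeds with probability at least $1-p$, so the induced effective error on each top-level bit is still a symmetric binary channel with some rate $p_i\le p<1/2$. The resulting effective channel on the cycle code $\calC_{g+1}$ is, therefore, a product of symmetric binary channels each with bias $\epsilon_i\ge\epsilon$ in the sense of Eq.~(\ref{bias}). Inspecting the proof of Lemma~\ref{lemma:decoder}, the bound~(\ref{upper_bound}) depends only on having a uniform lower bound on the bias of each factor, so it applies to this nonidentical but uniformly biased product channel with $\epsilon$ replaced by the worst-case value. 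Hence, the cycle decoder recovers the correct cycle-code error with probability $1-O(g^2\exp(-2\epsilon^2(g{+}1)))$.

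Combining the two levels, the probability of a logical error on the surface code is bounded by the sum of the bottom-level and top-level failure probabilities, both of which tend to zero as $n\to\infty$ whenever $p<1/2$. To make the algorithm truly polynomial-time, I would then note that Theorem~\ref{thm:code-equivalence} produces the parity checks of the bottom- and top-level codes as explicit linear combinations (mod~2) of the $A_v$ and $B_p$ checks of Eq.~(\ref{ab}); the corresponding coefficients can be computed once and for all by Gaussian elimination in time $O(n^3)$, giving a linear map between the surface-code syndrome and the concatenated-code syndromes needed as input to the two decoders. Applying this map, the repetition decoders ($O(n)$ total work), and the cycle decoder ($O(g^3)=O(n^{3/2})$) in sequence yields a polynomial-time decoder that succeeds with high probability for any $p<1/2$, establishing the $50\%$ threshold.

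The main conceptual point to be careful about is the second paragraph: the top-level channel is not IID across the cycle-code bits, because the three repetition-code sizes $jk/g^2$, $2jk/g^2$, $4jk/g^2$ produce different $p_i$. I would verify that Hoeffding's inequality in the proof of Lemma~\ref{lemma:decoder} needs only independence and a uniform lower bound on per-bit bias, not identical distribution, so the nonuniform effective rates cause no trouble. Everything else—the linear-algebra translation between syndromes and the polynomial runtime bookkeeping—is routine.
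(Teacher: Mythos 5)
Your proposal is correct and follows essentially the same route as the paper: level-by-level decoding of the concatenated structure from Theorem~\ref{thm:code-equivalence}, with majority vote on the bottom-level repetition codes yielding effective symmetric channels of rate at most $p<1/2$ for the top-level cycle code, Lemma~\ref{lemma:decoder} applied to that effective channel, and an $O(n^3)$ linear-algebra step to translate between the surface-code syndrome and the classical syndromes. Your explicit check that Hoeffding's inequality tolerates the nonidentical (but independent and uniformly biased) effective rates is a point the paper passes over with ``one can easily verify,'' so it is a welcome addition rather than a deviation.
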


In Sec.~\ref{sec:rotated-codes}, we show that the above corollary also applies to rotated surface codes, with odd linear dimensions.
A numerical demonstration of the $50\%$ threshold of the surface code with pure $Y$ noise is given in Sec.~\ref{sec:co-prime-advantage-pure}.

\subsection{Y-type logical operators of the standard surface code}
\label{sec:y-type-logicals}

The structure of standard surface codes with pure $Y$ noise, described in Sec.~\ref{sec:code-equivalence}, also manifests itself in the structure and, consequently, the minimum weight and count of $Y$-type logical operators, i.e., logical operators consisting only of $Y$ and identity single-qubit Paulis.
In this section, we give explicit formulas for the minimum weight and count of $Y$-type logical operators.
Highlighting the cases of coprime and square codes, as well as comparing the formulas to those for $X$- and $Z$-type logical operators, we remark on how the minimum weight and count of $Y$-type logical operators contribute to the performance advantage with pure $Y$ noise and $Y$-biased noise seen in Ref.~\cite{Tuckett2018} and Sec.~\ref{sec:thresholds-biased}, for surface codes, in general, and in Secs.~\ref{sec:co-prime-advantage-pure} and \ref{sec:co-prime-advantage-biased}, for coprime and rotated codes, in particular.

\subsubsection{Logical operator minimum weight}
We show that the minimum-weight $Y$-type logical operator on standard surface codes is comparatively heavy.
The $X$-distance $d_X$ of a code is the weight of the minimum-weight $X$-type logical operator.
Clearly, the minimum-weight $X$-type logical operator on a $j{\times}k$ code is a full column of $X$ operators on horizontal edges, and, hence, $d_X=j$; similarly, $d_Z=k$.
It is also clear that the minimum-weight $Y$-type logical operator on a square $j{\times}j$ code is a full diagonal of $Y$ operators, and, hence, $d_Y=2j-1$.
From the proof of Theorem~\ref{thm:code-equivalence}, it is apparent that, in the case of pure $Y$ noise, a $j{\times}k$ surface code can be viewed as a tiling of $jk/g^2$ copies of a square $g{\times}g$ code, where $g = \gcd(j,k)$.
Therefore, the $Y$-distance of a $j{\times}k$ surface code is given by the following corollary.

\begin{corollary}[$Y$-distance]\label{cor:distance}
  For a standard $j{\times}k$ surface code, the weight of the minimum-weight $Y$-type logical operator, and, hence, the distance of the code to pure $Y$ noise, is
  $$
  d_Y = \frac{(2g-1)jk}{g^2}
  $$
  where $g = \gcd(j,k)$.
\end{corollary}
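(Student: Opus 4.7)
The plan is to combine Theorem~\ref{thm:code-equivalence} with a weighted minimum-cut computation on the complete graph $K_{g+1}$.

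By Theorem~\ref{thm:code-equivalence}, every codeword of the $Y$-code, and hence every $Y$-type operator in $\calC(\calG)$, is obtained by choosing a top-level codeword of the cycle code $\calC_{g+1}$---equivalently, a cut $S \mid \bar{S}$ of $K_{g+1}$---and, on each selected orbit, the all-ones codeword of the corresponding bottom-level repetition code. The weight of such an operator in the surface code therefore equals the sum of orbit sizes over the edges in the cut. Labeling the two reflection-axis vertices of $K_{g+1}$ as $1$ and $g+1$, the orbit-size dictionary from the theorem reads $jk/g^2$ on the axis-to-axis edge $(1,g+1)$, $2jk/g^2$ on every axis-to-bulk edge, and $4jk/g^2$ on every bulk-to-bulk edge.

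Next, I would lower-bound $d_Y$ by minimizing the surface-code weight over all non-trivial cuts. A case analysis on $a = |S \cap \{1, g+1\}|$ and $b = |S \cap \{2, \ldots, g\}|$ gives, in units of $jk/g^2$,
\begin{align*}
a=0:&\quad 4b(g-b),\\
a=1:&\quad 4b(g-1-b) + 2g - 1,\\
a=2:&\quad 4(g-1-b)(b+1),
\end{align*}
and a direct inspection shows the global minimum over non-trivial cuts is $2g-1$, attained (up to $S \leftrightarrow \bar{S}$) at $a=1$, $b=0$, i.e.\ at the axis-vertex stars $[\delta^1]$ and $[\delta^{g+1}]$. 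Thus every non-trivial $Y$-type element of $\calC(\calG)$---in particular every $Y$-type logical operator---has weight at least $(2g-1)jk/g^2$.

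Finally, I would match this bound from above by exhibiting an explicit $Y$-type representative of $\overline{Y}$. Starting from $\overline{Y} \propto \overline{X}\,\overline{Z}$, whose support is the ``L''-shape formed by the left column and top row of a single $g \times g$ block of the lattice, multiplying by the appropriate vertex and plaquette stabilizers deforms this support onto an anti-diagonal of $Y$ operators of weight $2g-1$ within that block; tiling as in the proof of Theorem~\ref{thm:code-equivalence} then promotes this anti-diagonal to the extended diagonal $\Delta^i$ consisting of $jk/g^2$ reflected copies, yielding a codeword of the $Y$-code in the $\overline{Y}$ coset with total weight $(2g-1)jk/g^2$, matching the lower bound. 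The main obstacle I anticipate lies in this upper-bound step: one must verify that the reflected copies of the $g \times g$ anti-diagonal splice together into a legitimate codeword of the $Y$-code on the full $j \times k$ lattice without residual parity-check violations at tile boundaries. This is precisely what the construction of the extended diagonals $\Delta^i$ in the proof of Theorem~\ref{thm:code-equivalence} guarantees, but it is the one place where the general $j \ne k$ case requires genuine input beyond the square-lattice picture.
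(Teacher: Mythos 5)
Your proposal is correct, and it rests on the same foundation as the paper's argument---Theorem~\ref{thm:code-equivalence}---but it carries out the minimization in a genuinely more rigorous way. The paper's proof of the corollary is essentially one sentence: it asserts that the minimum-weight $Y$-type logical of a square $g\times g$ code is ``clearly'' a full diagonal of weight $2g-1$, and then views the $j\times k$ code as a tiling of $jk/g^2$ such squares. You instead convert the problem into a weighted minimum-cut computation on $K_{g+1}$, using the correct orbit-size dictionary ($jk/g^2$ on the axis-axis edge, $2jk/g^2$ on axis-bulk edges, $4jk/g^2$ on bulk-bulk edges, with multiplicities $1$, $2(g-1)$, $\binom{g-1}{2}$ matching the theorem), and your three-case formulas and the conclusion that the minimum over nontrivial cuts is $2g-1$ units, attained at the star of an axis vertex, all check out (including the degenerate case $g=1$ and the comparison $2g-1<4(g-1)$ for $g\ge2$). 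You are also right to flag that the min-cut bound only controls nonzero codewords of the $Y$-code---which include $Y$-type stabilizers---so that one must separately exhibit a \emph{logical} operator of weight $(2g-1)jk/g^2$; the paper glosses over this distinction. Your proposed upper bound (deforming $\overline{X}\,\overline{Z}$ onto the extended diagonal $\Delta^1$) works, though a quicker route is to observe that $\partial\Delta^1=e^1$ means $\Delta^1$ overlaps $\overline{Z}$ on exactly one qubit and hence anticommutes with it, so $\Delta^1$ cannot be a stabilizer and must be a nontrivial logical. In short: same decomposition, but your cut-space analysis supplies the minimality argument that the paper leaves implicit.
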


As shown in Sec.~\ref{sec:rotated-codes}, the $Y$-distance of the rotated $j{\times}k$ surface code, with $j$ and $k$ odd, is $d_Y=jk$.
The distances to pure noise for various surface code families are summarized in Table~\ref{table:distance}.
We note that, for all code families, $Y$-distance exceeds $X$- or $Z$-distance, which is consistent with the increase in error threshold of surface codes with biased noise seen in Ref.~\cite{Tuckett2018} and Sec.~\ref{sec:thresholds-biased}.
Furthermore, we note that the $Y$-distance of square codes is $d_Y=O(\sqrt{n})$ while that of coprime and rotated codes is $d_Y=O(n)$, where $n$ is the number of physical qubits.
This feature of near-optimal and optimal $Y$-distance contributes to the significant improvement in logical failure rate of coprime and rotated codes over square codes with pure $Y$ noise and $Y$-biased noise, see Secs.~\ref{sec:co-prime-advantage-pure} and \ref{sec:co-prime-advantage-biased}.

\begin{table}[ht]
  \caption{\label{table:distance}
    Distances to pure noise for $j{\times}k$ surface code families.
    ($d_P$ refers to the distance to pure $P$ noise, where $P \in \{X, Y, Z\}$.)
  }
  \begin{ruledtabular}
    \begin{tabular}{l l l l}
      Code family & $d_X$ & $d_Y$ & $d_Z$ \\
      \hline
      Square & $j$ & $2j-1$ & $j$ \\
      Coprime & $j$ & $jk$ & $k$ \\
      $\gcd(j,k)=g$ & $j$ & $(2g-1)(jk/g^2)$ & $k$ \\
      Rotated ($j,k$ odd) & $k$ & $jk$ & $j$ \\
    \end{tabular}
  \end{ruledtabular}
\end{table}

\subsubsection{Logical operator count}
We show that $Y$-type logical operators on standard surface codes are comparatively rare.
The number $c_X$ of $X$-type logical operators is equal to the number of ways the logical $\overline{X}$ operator can be deformed by $X$-type stabilizer generators.
The number of $X$-type stabilizer generators (i.e., vertices) on a $j{\times}k$ surface code is $j(k-1)$, and, hence, $c_X=2^{j(k-1)}$; similarly, $c_Z=2^{(j-1)k}$.
From the proof of Theorem~\ref{thm:code-equivalence}, it is apparent that the $g$ basis codewords of the $Y$-code correspond to a single logical operator and a full set of $g-1$ linearly independent $Y$-type stabilizers of a $j{\times}k$ surface code, where $g = \gcd(j,k)$.
Therefore, the number of $Y$-type logical operators of a $j{\times}k$ surface code is given by the following corollary.

\begin{corollary}[$Y$-count]\label{cor:count}
  For a standard $j{\times}k$ surface code, the number of $Y$-type logical operators is
  $$
  c_Y = 2^{g-1}
  $$
  where $g = \gcd(j,k)$.
  The number of $Y$-type stabilizers is also $c_Y$.
\end{corollary}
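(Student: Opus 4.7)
The plan is to use the concatenated structure from Theorem~\ref{thm:code-equivalence} to count all $Y$-type elements of $\mathcal{C(G)}$, then split that count between $Y$-type stabilizers and $Y$-type logicals by evaluating commutators with $\overline{X}$ and $\overline{Z}$.

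First, I would note that a $Y$-type Pauli $P(y)$ lies in $\mathcal{C(G)}$ precisely when the bit string $y$ lies in the classical $Y$-code $\calG$, since $P(y)$ commutes with $A_v$ (resp.\ $B_p$) if and only if $A_v(y)=0$ (resp.\ $B_p(y)=0$). Theorem~\ref{thm:code-equivalence} identifies $\calG$ as a concatenation of the cycle code $\calC_{g+1}$, which encodes $g$ bits, with inner repetition codes each encoding one bit, so $\dim{(\calG)}=g$ and the number of $Y$-type operators in $\mathcal{C(G)}$ equals $2^g$.

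Next, I would classify the basis codewords of $\calG$ as stabilizers or logicals via the $\mathbb{F}_2$-linear commutator map $\phi\colon \calG\to\mathbb{F}_2^2$ whose two coordinates record the commutation of $P(y)$ with $\overline{X}$ and $\overline{Z}$. Because $\overline{X}$ and $\overline{Z}$ are supported on the left column and top row of horizontal edges, the coordinates of $\phi(y)$ are the parities of the overlaps of $y$ with those two boundaries. Using the extended diagonals $\Delta^1,\ldots,\Delta^g$ produced in the proof of Theorem~\ref{thm:code-equivalence} (which reduce to $\delta^i$ in the square case), each $\Delta^i$ with $i\ge 2$ inherits the tile-level projection $\partial \delta^i=e^{i-1}+e^i$ of weight $2$, and so contributes total boundary weights $2k/g$ on the top and $2j/g$ on the left, both even. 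By contrast, $\Delta^1$ alternates $\delta^1$ and $\delta^{g+1}$, each having weight-$1$ boundary projection, yielding total boundary weights $k/g$ and $j/g$; since $\gcd(j/g,k/g)=1$, at least one of these is odd, so $\Delta^1$ anticommutes with at least one logical.

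Consequently, the image of $\phi$ is generated by the single nonzero vector $\phi(\Delta^1)$ and has order $2$, so the kernel $\calG_{\mathrm{stab}}$ of $Y$-type stabilizer codewords has index $2$ in $\calG$. This yields $|\calG_{\mathrm{stab}}|=2^{g-1}$ $Y$-type stabilizers and $|\calG|-|\calG_{\mathrm{stab}}|=2^{g-1}$ $Y$-type logical operators, proving both assertions.

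The main obstacle is the boundary-weight accounting for the extended diagonals in the rectangular case. For the square code the computation is transparent from $\partial \delta^i$ together with the reflection symmetry $\calR$ of the proof of Theorem~\ref{thm:code-equivalence}. For $j\ne k$, however, one must carefully verify that the tile-by-tile alternation of $\delta^i$ and $\delta^{g+2-i}$ interacts cleanly with the global boundary of the $j\times k$ lattice, and confirm that the weight-$1$ parity checks arising from boundary edges (on which $\Delta^i$ vanishes by construction) do not spoil the stated counts $(2k/g,2j/g)$ for $i\ge 2$ or $(k/g,j/g)$ for $i=1$; this combinatorial bookkeeping is the only subtle step beyond invoking Theorem~\ref{thm:code-equivalence}.
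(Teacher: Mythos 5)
Your proposal is correct and follows essentially the same route as the paper: both count the $2^g$ $Y$-type elements of $\mathcal{C(G)}$ via the dimension-$g$ $Y$-code of Theorem~\ref{thm:code-equivalence} and then observe that exactly one basis codeword is logical while the other $g-1$ generate the $Y$-type stabilizers. The paper states this last classification as ``apparent''; your commutator-parity map $\phi$ and the observation that $\gcd(j/g,k/g)=1$ forces $\Delta^1$ to anticommute with a logical supply the justification explicitly, and the boundary-weight bookkeeping you flag does work out as you state.
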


As shown in Sec.~\ref{sec:rotated-codes}, the number of $Y$-type logical operators on the rotated $j{\times}k$ surface code, with $j$ and $k$ odd, is $c_Y=1$.
The counts of pure noise logical operators for various surface code families are summarized in Table~\ref{table:count}.
We note that, for all code families, the number of logical operators for pure $Y$ noise is much lower than the number for pure $X$ or $Z$ noise, which is consistent with the increase in error threshold of surface codes with biased noise seen in Ref.~\cite{Tuckett2018} and Sec.~\ref{sec:thresholds-biased}.
Furthermore, we note that the number of $Y$-type logical operators for square codes is $c_Y=O(2^{\sqrt{n}})$, while for coprime and rotated codes it is $c_Y=O(1)$, where $n$ is the number of physical qubits.
This feature, as an extreme example of the role of entropy in error correction~\cite{Beverland2019}, contributes to the significant improvement in logical failure rate of coprime and rotated codes over square codes with pure $Y$ noise and $Y$-biased noise, see Secs.~\ref{sec:co-prime-advantage-pure} and \ref{sec:co-prime-advantage-biased}.

\begin{table}[ht]
  \caption{\label{table:count}
    Counts of pure noise logical operators for $j{\times}k$ surface code families.
    ($c_P$ refers to the number of $P$-type logical operators, where $P \in \{X, Y, Z\}$.)
  }
  \begin{ruledtabular}
    \begin{tabular}{l l l l}
      Code family & $c_X$ & $c_Y$ & $c_Z$ \\
      \hline
      Square & $2^{j^2-j}$ & $2^{j-1}$ & $2^{j^2-j}$ \\
      Coprime & $2^{j(k-1)}$ & $1$ & $2^{(j-1)k}$ \\
      $\gcd(j,k)=g$ & $2^{j(k-1)}$ & $2^{g-1}$ & $2^{(j-1)k}$ \\
      Rotated ($j,k$ odd) & $2^{(j-1)(k+1)/2}$ & $1$ & $2^{(j+1)(k-1)/2}$ \\
    \end{tabular}
  \end{ruledtabular}
\end{table}

\subsection{Rotated surface codes}
\label{sec:rotated-codes}

We can relate the results from the previous subsections to rotated surface codes as depicted in Fig.~\ref{fig:rotated-code}.
In particular, we show that rotated codes, with odd linear dimensions, have similar features to coprime codes as given by Corollaries~\ref{cor:distance} and \ref{cor:count}; that is, such rotated codes also admit a single $Y$-type logical operator of weight $O(n)$, where $n$ is the number of physical qubits.
Equivalently, the $Y$-distance of such rotated codes, like coprime codes, is $d_Y=O(n)$; notably, the rotated code is optimal, in that it achieves $d_Y=n$ precisely.
Rotated surface codes with even linear dimensions do not share these features, having distance $d_Y=O(\sqrt{n})$ with pure $Y$ noise, and we do not discuss them further.
We conclude by showing that the $50\%$ threshold of surface codes with pure $Y$ noise, given by Corollary~\ref{cor:threshold}, also applies to (odd) rotated codes.

We consider the rotated surface code, with odd linear dimensions, and two-qubit (four-qubit) stabilizer generators on the boundary (in the bulk), as illustrated in Fig.~\ref{fig:rotated-code}.
The following theorem shows that this version of the surface code is nondegenerate and has a distance of $d_Y=n$ against pure $Y$ noise.
\begin{theorem}[Rotated code $Y$-logical]\label{thm:rotated-y-logical}
  For a rotated surface code, with odd linear dimensions, $Y^{\otimes n}$ is the only nontrivial $Y$-type logical operator, where $n$ is the number of physical qubits.
\end{theorem}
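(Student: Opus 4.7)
My plan is to reformulate the question as one about a classical binary code. Since each single-qubit $Y$ anticommutes with both $X$ and $Z$, a $Y$-type Pauli $P(y)$ as in Eq.~(\ref{P(y)}) commutes with a stabilizer generator $S$ precisely when $\sum_{e\in\mathrm{supp}(S)}y_e\equiv 0\pmod 2$. Hence the $Y$-type elements of $\mathcal{C}(\mathcal{G})$ are in bijection with codewords of a classical binary code $\mathcal{Y}\subseteq\mathbb{F}_2^n$ defined by one parity check per stabilizer generator, and the theorem reduces to showing that $|\mathcal{Y}|=2$, that its unique nonzero codeword is the all-ones vector $\mathbf{1}$, and that $P(\mathbf{1})=Y^{\otimes n}$ lies outside $\mathcal{G}$.

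Label qubits by $(x,y)\in\{1,\ldots,j\}\times\{1,\ldots,k\}$ as in Fig.~\ref{fig:rotated-code}(b). For the bulk, every interior plaquette (whether $X$- or $Z$-type) covers a $2\times 2$ block and yields the parity check $y_{x,y}\oplus y_{x+1,y}\oplus y_{x,y+1}\oplus y_{x+1,y+1}=0$; a straightforward row-by-row induction identifies the common solution space with the $(j+k-1)$-dimensional set of separable strings $y_{x,y}=u(x)\oplus v(y)$, with $u,v$ determined up to a global flip $u\mapsto u\oplus 1$, $v\mapsto v\oplus 1$. I would then turn on the two-qubit boundary stabilizers: one on a row-adjacent pair yields $u(i)=u(i+1)$, and one on a column-adjacent pair yields $v(j)=v(j+1)$. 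The key geometric fact is that, when $j,k$ are odd, the $X/Z$ checkerboard forces the top- and bottom-boundary stabilizers to sit at complementary parities of $i$; their union therefore contributes one constraint per consecutive pair $(i,i+1)$, pinning $u$ to a constant. The identical argument for the left and right boundaries pins $v$ to a constant. Hence $y\equiv 0$ or $y\equiv 1$ and $|\mathcal{Y}|=2$.

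It remains to confirm that $P(\mathbf{1})=Y^{\otimes n}$ is a genuine logical operator. Every stabilizer has even-weight support (2 on the boundary, 4 in the bulk), so $\mathbf{1}$ automatically satisfies every parity check, placing $Y^{\otimes n}\in\mathcal{C}(\mathcal{G})$. To rule out $Y^{\otimes n}\in\mathcal{G}$, I compare with $\overline{X}$, a weight-$d_X$ string of $X$'s along a boundary: each of its single-qubit $X$'s anticommutes with the corresponding $Y$, so $Y^{\otimes n}$ and $\overline{X}$ anticommute whenever $d_X$ is odd. Since $d_X\in\{j,k\}$ is odd by hypothesis, $Y^{\otimes n}$ anticommutes with $\overline{X}$, which no stabilizer does; combined with the uniqueness established above, $Y^{\otimes n}$ is the unique nontrivial $Y$-type logical operator. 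The most delicate step is the complementary-parity claim on opposite boundaries: one must verify by direct inspection of the rotated layout that the top- and bottom-boundary two-qubit stabilizers exactly interleave, leaving neither a gap nor an overlap, and similarly on the sides. It is precisely the oddness of $j$ and $k$ together with the $X/Z$ checkerboard that makes this interleaving close up perfectly, consistent with the paper's remark that rotated codes with even linear dimensions instead have $d_Y=O(\sqrt{n})$.
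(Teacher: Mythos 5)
Your proof is correct, and it is organized differently from the paper's. The paper argues locally: it starts from a weight-two boundary check at the end of a row of faces, which forces the first vertical pair of qubits to agree, and then walks along that row of checks using the weight-four plaquettes to force every vertical pair straddling it to agree; since odd linear dimensions guarantee that every row and every column of checks terminates in a weight-two check, repeating the sweep in both directions links all $n$ qubits. You instead solve the problem globally in two stages: first you characterize the kernel of all the bulk (weight-four) checks as the $(j+k-1)$-dimensional space of separable strings $y_{x,y}=u(x)\oplus v(y)$, and only then impose the weight-two boundary checks, which translate into nearest-neighbor equalities on $u$ and $v$; the interleaving of the top/bottom (respectively left/right) weight-two checks at complementary parities --- which holds precisely because the checkerboard colorings of the top and bottom face rows are offset when $k$ is odd --- then pins $u$ and $v$ to constants. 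Both arguments ultimately rest on the same inspection of the rotated layout (the paper's ``as depicted'' is doing the same work as your ``verified by direct inspection''), but your version buys an explicit description of the bulk solution space, which makes transparent exactly how many degrees of freedom survive when a boundary check is missing and hence why even-dimension rotated codes fail to have $d_Y=n$; the paper's version is shorter and avoids introducing the separable ansatz. You are also more careful than the paper on the final point: the paper simply asserts that $Y^{\otimes n}$ is a logical operator, whereas you verify membership in $\mathcal{C}(\mathcal{G})$ via the even weight of every stabilizer and exclude membership in $\mathcal{G}$ via anticommutation with the odd-weight $\overline{X}$.
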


\begin{proof}
  It is clear that $Y^{\otimes n}$ is a $Y$-type logical operator. We show that it is the only nontrivial $Y$-type operator that commutes with every stabilizer of the code. Let $A=\bigotimes_i Y^{\alpha_i}$ be a $Y$-type operator. Consider a row of stabilizer generators (checks) and qubits that are adjacent to this row, numbered as shown below:
  \begin{equation*}
  \includegraphics[width=0.3\textwidth]{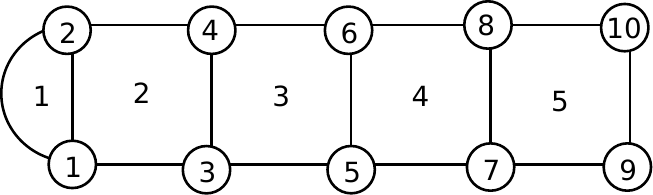}
  \end{equation*}
  
  In order for $A$ to commute with check 1 we require $\alpha_1=\alpha_2$.
  With the parity of these checks determined, we then see that, in order for $A$ to commute with check 2, we need $\alpha_3=\alpha_4$.
  Continuing along the row, we see that every pair of qubits $i,j$ in the same column must satisfy $\alpha_i=\alpha_j$.
  The assumption that the code has odd linear dimensions implies that each row and each column of checks includes a weight-two check, as depicted, ensuring that the same argument can equally be applied to every row or column of checks.
  Therefore, in order for $A$ to commute with all checks, we require $\alpha_1=\alpha_j$ for all $j$; i.e., a nontrivial $Y$-type logical must act as $Y$ on every qubit.
\end{proof}

We note that both the coprime $j{\times}k$ code and the (odd) rotated $j{\times}k$ code are nondegenerate against pure $Y$ noise and have $Y$-distance $d_Y=jk=O(n)$.
However, the rotated code is known to be the optimal layout for surface codes in terms of minimum distance~\cite{Bombin2007}, and this statement is also true in terms of $Y$-distance.
The rotated code has $d_Y=jk=n$, whereas the coprime code has $d_Y=jk=O(n)$ but contains $n=2jk-j-k+1$ physical qubits.

Furthermore, it is clear that the (odd) rotated code with pure $Y$ noise is equivalent to the repetition code and, hence, has a threshold of $50\%$, in accordance with Corollary~\ref{cor:threshold}.

\section{\label{sec:performance-pure}Performance of surface codes with pure Y noise}

\begin{figure*}
  \includegraphics{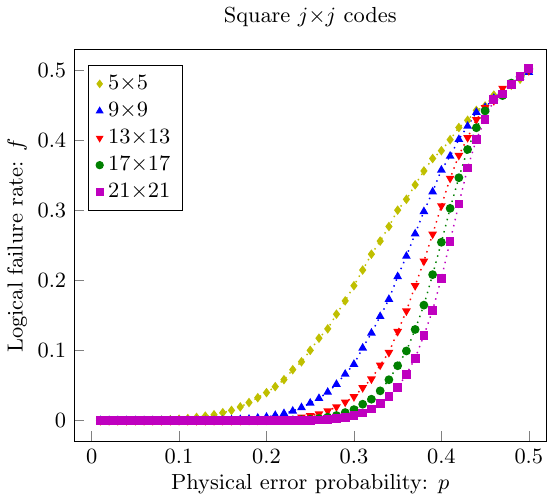}
  \includegraphics{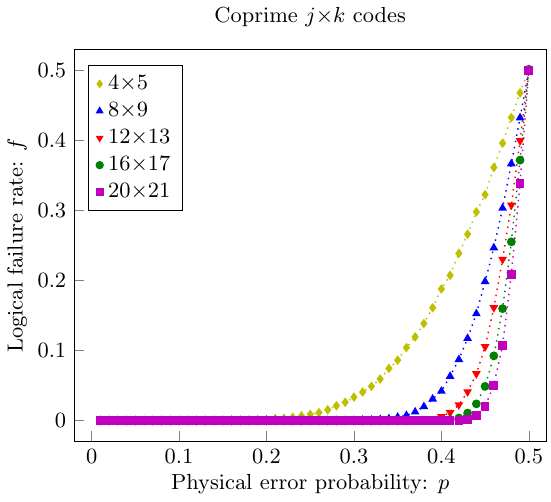}
  \includegraphics{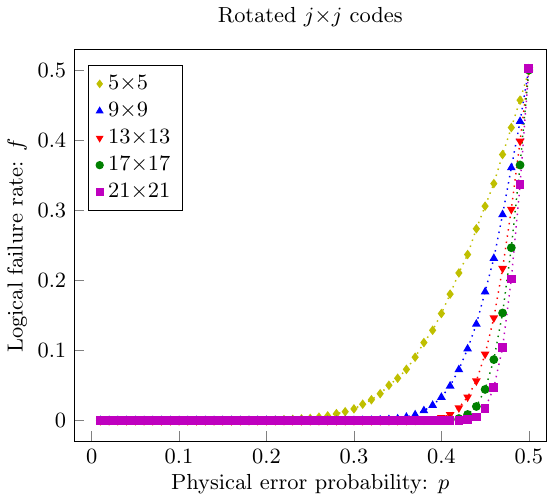}
  \caption{\label{fig:logical-failure-rate-v-physical-error-rate}
    Logical failure rate $f$ as a function of physical error probability $p$ for surface code families: square, coprime, and rotated, subject to pure $Y$ noise.
    Data points are sample means over 60\,000 runs using exact maximum-likelihood decoding.
    Dotted lines connect successive data points for a given code size.
  }
\end{figure*}
\begin{figure*}
  \includegraphics{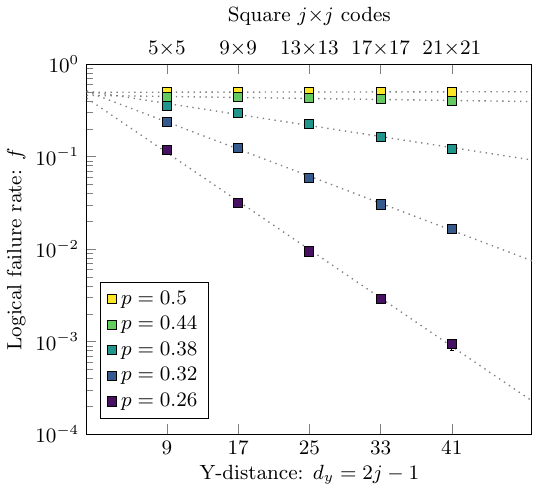}
  \includegraphics{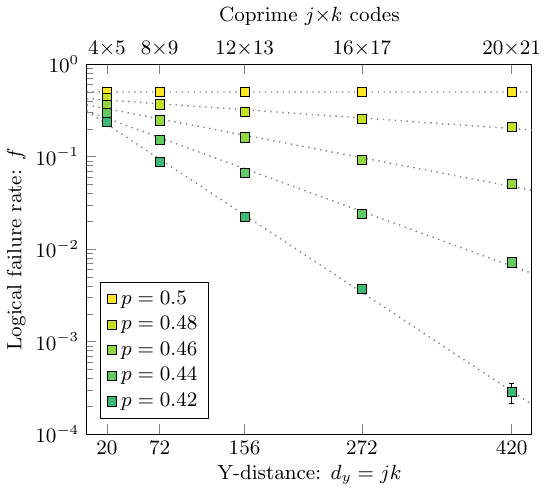}
  \includegraphics{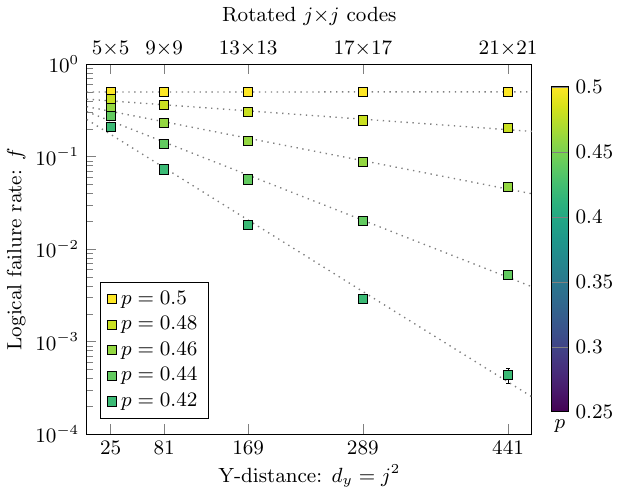}
  \caption{\label{fig:logical-failure-rate-v-y-code-distance}
    Exponential decay of the logical failure rate $f$ with respect to code distance $d_Y$ to pure $Y$ noise in the regime of physical error probability $p$ at and below the error threshold for surface code families: square, coprime, and rotated, subject to pure $Y$ noise.
    Data points are sample means over 60\,000 runs using exact maximum-likelihood decoding.
    Dotted lines indicate a least-squares fit to data for a given $p$, and error bars indicate one standard deviation.
  }
\end{figure*}

In Sec.~\ref{sec:y-noise-theorems}, we present our analytical results for surface codes with pure $Y$ noise, highlighting features that contribute to the ultrahigh threshold results with $Y$-biased noise, found in Ref.~\cite{Tuckett2018} and improved upon in Sec.~\ref{sec:thresholds-biased}.
Our analytical results also indicate that coprime and (odd) rotated codes should achieve lower logical failure rates than square codes with pure $Y$ noise.

Here, we present our numerical investigation into the performance of  surface codes with pure $Y$ noise.
In particular, we present results for square, coprime, and rotated surface code families, confirming the $50\%$ error threshold.
We also demonstrate a significant reduction in the logical failure rate for coprime and rotated codes compared with square codes.
Specifically, quadratically fewer physical qubits may be used to achieve a target logical failure rate by using coprime or rotated codes compared with square codes.

\subsection{Advantage of coprime and rotated surface codes with pure Y noise}
\label{sec:co-prime-advantage-pure}

We investigate the performance of surface codes with pure $Y$ noise.
Besides confirming the $50\%$ threshold for the surface code, we demonstrate a significant reduction in logical failure rate for coprime and (odd) rotated surface codes compared to square surface codes such that a target logical failure rate may be achieved with quadratically fewer physical qubits using coprime or rotated codes in place of square codes.
That is, we demonstrate that logical failure rate decays exponentially with $Y$-distance but since, in accordance with Corollary~\ref{cor:distance}, the $Y$-distance of square codes is $O(\sqrt{n})$ and that of coprime and rotated codes is $O(n)$, logical failure rate decays quadratically faster with $n$ for coprime and rotated codes, where $n$ is the number of physical qubits.

In Fig.~\ref{fig:logical-failure-rate-v-physical-error-rate}, we plot logical failure rate $f$ as a function of physical failure rate $p$ for surface codes belonging to the following families: square, coprime, and rotated codes.
For coprime and rotated codes, we see clear evidence of an error threshold at $p_c = 50\%$, consistent with Corollary~\ref{cor:threshold}.
For square codes, the data are consistent with a threshold of $p_c = 50\%$, but the evidence is less definitive.
Within a code family, it is expected that smaller codes will perform worse than larger codes below threshold.
However, comparing the performance of smaller coprime and rotated codes to square codes, we see a significant improvement in logical failure rate across the full range of physical error probabilities.
For example, the $21{\times}21$ rotated code, with $n=441$, and the $20{\times}21$ coprime code, with $n=800$, both clearly outperform the $21{\times}21$ square code, with $n=841$.
This result can be seen as a qualitative demonstration of the effect of the features of surface codes with pure $Y$ noise identified in Sec.~\ref{sec:y-noise-theorems}.

In Fig.~\ref{fig:logical-failure-rate-v-y-code-distance}, we plot logical failure rate $f$ as a function of code distance $d_Y$ to pure $Y$ noise at physical error probabilities $p$ at and below the threshold $p_c=50\%$ for surface codes belonging to the following families: square, coprime, and rotated codes.
For each code family, we see an exponential decay of the logical failure rate $f \sim \exp(-\alpha d_Y)$, where $\alpha$ is a function of $(p_c - p)$, which is consistent with the threshold $p_c = 50\%$ predicted by Corollary~\ref{cor:threshold}.
Considering $j{\times}k$ surface codes, according to Corollary~\ref{cor:distance}, $d_Y=2j-1$ for square codes, $d_Y=jk$ for coprime codes, and $d_Y=j^2$ for rotated codes.
That is, $d_Y=O(\sqrt{n})$ for square codes, and $d_Y=O(n)$ for coprime and rotated codes.
As a result, based on the observed exponential decay, quadratically fewer physical qubits are required to achieve a target logical failure rate for a given physical error rate by using coprime or rotated codes in place of square codes.

To investigate the performance of different families of surface codes with pure $Y$ noise, we sample the logical failure rate across a full range of physical error probabilities for square, coprime, and rotated codes.
For each code family, we use an exact maximum-likelihood decoder.
For square and coprime codes, we use the $Y$-decoder, defined in Appendix~\ref{sec:y-decoder}, to avoid the limitations of an approximate~\cite{Tuckett2018} or nonoptimal (see Sec.~\ref{sec:code-equivalence}) decoder.
For rotated codes, we use the tensor-network decoder, defined in Sec.~\ref{sec:tn-decoding-rotated-codes}, which is exact for pure $Y$ noise.
We use code sizes \{$5{\times}5$, $9{\times}9$, $13{\times}13$, $17{\times}17$, $21{\times}21$\} for square and rotated codes and \{$4{\times}5$, $8{\times}9$, $12{\times}13$, $16{\times}17$, $20{\times}21$\} for coprime codes and 60\,000 runs per code size and physical error probability.
In our decoder implementations, we use the Python language with SciPy and NumPy libraries~\cite{scipy,numpy} for fast linear algebra and the mathmp library~\cite{mpmath} for arbitrary-precision floating-point arithmetic.

\section{\label{sec:performance-biased}Performance of surface codes with biased noise}

Our analytical results (see Sec.~\ref{sec:y-noise-theorems}), highlight features of surface codes with pure $Y$ noise that contribute to ultrahigh thresholds with $Y$-biased noise (see Ref.~\cite{Tuckett2018}) and the improvement in logical failure rate achieved by coprime and rotated surface codes (see Sec.~\ref{sec:performance-pure}).

Here, we present our numerical investigation into the performance of surface codes with $Y$-biased noise.
In particular, we improve on the results of Ref.~\cite{Tuckett2018}, providing strong evidence that the threshold of the surface code tracks the hashing bound exactly for all biases.
We also demonstrate that the improvement in logical failure rate of coprime and rotated codes with pure $Y$ noise persists with $Y$-biased noise, such that a \emph{smaller} coprime or rotated code outperforms a square code for a wide range of biases.

\subsection{\label{sec:thresholds-biased}Thresholds of surface codes with biased noise}

In previous work~\cite{Tuckett2018}, we show that the surface code exhibits ultrahigh thresholds with $Y$-biased noise (equivalently, $Z$-biased noise on the modified surface code of Ref.~\cite{Tuckett2018}).
The results of Ref.~\cite{Tuckett2018} indicate that the threshold error rate of the surface code appears to follow the hashing bound for low to moderate bias; however, it is unclear whether the surface code saturates the hashing bound for all biases.

Here, we improve on the results of Ref.~\cite{Tuckett2018}, providing strong evidence that the threshold error rate of the surface code saturates the hashing bound exactly for all biases.
Our results are summarized in Fig.~\ref{fig:threshold-v-bias}, in which threshold estimates for a range of biases are plotted along with the hashing bound.
Error bars are one standard deviation relative to the fitting procedure. 
The threshold estimates are very close to the hashing bound, and any residual differences are likely due to finite size effects and decoder approximation.
We estimate the following thresholds: $18.8(2)\%$, $19.4(1)\%$, $22.3(1)\%$, $28.1(2)\%$, $33.9(2)\%$, $39.2(1)\%$, $42.9(2)\%$, and $45.4(2)\%$, with $\eta=0.5$ (standard depolarizing noise), 1, 3, 10, 30, 100, 300, and 1000, respectively.
The corresponding hashing bound values are $18.9\%$, $19.4\%$, $22.2\%$, $27.8\%$, $33.5\%$, $39.0\%$, $42.8\%$, and $45.6\%$, respectively.

These thresholds are all achieved using a particular tensor-network decoder.
The tensor-network decoder of Ref.~\cite{Bravyi2014}, used in Ref.~\cite{Tuckett2018}, is an approximate maximum-likelihood decoder tuned via a parameter $\chi$, allowing a trade-off between accuracy and computational cost.
In Ref.~\cite{Tuckett2018}, we use $\chi = 48$ to keep the simulations tractable, but we find the decoder is not completely converged in the intermediate- to high-bias regime.
Here, we improve on these results by using a tensor-network decoder, defined in Sec.~\ref{sec:tn-decoding-rotated-codes}, that adapts the decoder of Ref.~\cite{Bravyi2014} to rotated codes and achieves a much stronger convergence with biased noise.
The convergence of the decoder with bias is summarized in Fig.~\ref{fig:convergence-v-bias}, which shows an estimate of the logical failure rate for the rotated $33{\times}33$ surface code near threshold as a function of $\chi$ for a range of biases.
For each bias, the shift in logical failure rate, between the two largest $\chi$ shown, is less than half a standard deviation, assuming a binomial distribution.

\begin{figure}[t]
  \includegraphics{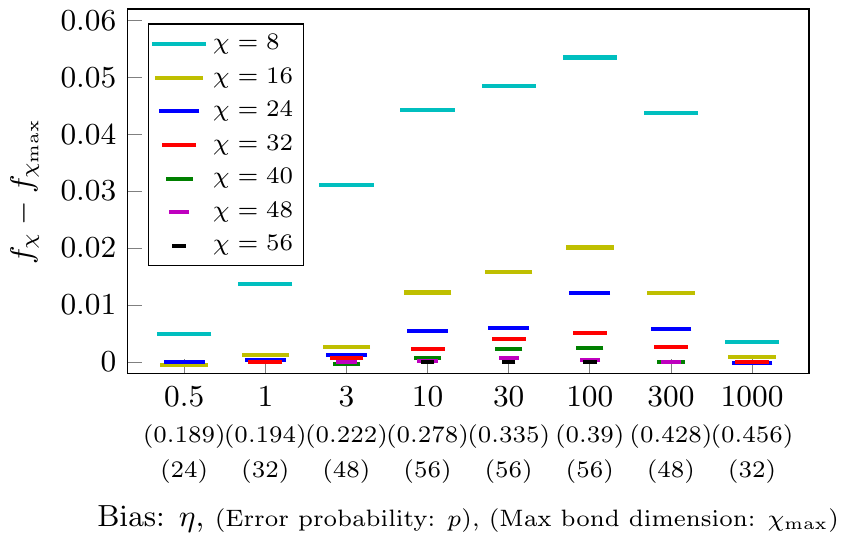}
  \caption{\label{fig:convergence-v-bias}
    Decoder convergence for the rotated $33{\times}33$ surface code, represented by shifted logical failure rate $f_\chi-f_{\chi_{\max}}$, as a function of $\chi$ at a physical error probability $p$ near the threshold for the given bias $\eta$ .
    Each data point corresponds to 30\,000 runs with identical errors generated across all $\chi$ for a given bias.
  }
\end{figure}

Our method to numerically estimate the threshold of the surface code with biased noise follows the general approach taken in Ref.~\cite{Tuckett2018}, with the key difference that we use the tensor-network decoder adapted to rotated codes (see Sec.~\ref{sec:tn-decoding-rotated-codes}) and choose $\chi$ such that the decoder more strongly converges.
We give a brief summary of the approach here and refer the reader to Ref.~\cite{Tuckett2018} for full details.
We use rotated surface codes of sizes $21{\times}21$, $25{\times}25$, $29{\times}29$, and $33{\times}33$.
We estimate the threshold for biases $\eta=0.5, 1, 3, 10, 30, 100, 300,$ and $1000$, where $\eta=p_Y/(p_X+p_Z)$ and $p_X=p_Z$ (see Sec.~\ref{sec:definitions}); we use decoder approximation parameter $\chi=16, 24, 40, 48, 48, 48, 40,$ and $24$, respectively, to achieve convergence to within less than half a standard deviation.
We run 30\,000 simulations per code size and physical error probability.
As in Ref.~\cite{Tuckett2018}, we use the critical exponent method of Ref.~\cite{Wang2003} to obtain threshold estimates with jackknife resampling over the code distances to determine error bounds.

\subsection{\label{sec:co-prime-advantage-biased}Advantage of coprime and rotated surface codes with biased noise}

In Sec.~\ref{sec:co-prime-advantage-pure}, we give a demonstration that coprime and rotated surface codes outperform square surface codes with pure $Y$ noise in terms of logical failure rate.
It is natural to ask if coprime and rotated codes also outperform square codes with $Y$-biased noise, i.e., when $X$ and $Z$ errors may also occur.
We demonstrate that a significant reduction in logical failure rate against biased noise can be achieved using a rotated $j{\times}j$ code, containing $n=j^2$ physical qubits, compared to a square $j{\times}j$ code, containing $n=2j^2-2j+1$ physical qubits.

Our results are summarized in Fig.~\ref{fig:coprime-v-square-small-code-comparison}, where we compare the rotated $9{\times}9$ code, containing 81 physical qubits, to the square $9{\times}9$ code, containing 145 physical qubits.
With standard depolarizing noise, i.e., $\eta=0.5$, and with a low bias, e.g.\ $\eta=10$ (where $Y$ errors are 10 times more likely than both $X$ and $Z$), we see similar performance for the rotated and square codes.
In the limit of pure $Y$ noise, we see the very large improvement, across the full range of physical error probabilities, that is already demonstrated in Sec.~\ref{sec:co-prime-advantage-pure}.
Most interestingly, the improvement remains large through the intermediate-bias regime, $\eta=100$, over a wide range of physical error probabilities, indicating that the advantage of rotated codes over square codes persists with modest noise biases.
We note that qualitatively similar results are observed when comparing the coprime $7{\times}8$ code to the square $8{\times}8$ code (not shown here).

The advantage of rotated codes with biased noise can be explained in terms of the features of surface codes with $Y$ noise identified in Sec.~\ref{sec:y-noise-theorems}.
The rotated $9{\times}9$ code has the same $X$- and $Z$-distance ($d_X=d_Z=9$) as the square $9{\times}9$ code.
However, the rotated code is much less sensitive to $Y$ noise, having a much larger $Y$-distance ($d_Y=81$) than the square code ($d_Y=17$) and having only one $Y$-type logical operator ($c_Y=1$) compared to many more such operators ($c_Y=2^8=256$) on the square code.
Therefore, for sufficient bias, we expect rotated $j{\times}j$ codes to outperform square $j{\times}j$ codes, despite containing approximately half the number of physical qubits.
Also, for a given bias, we expect the relative advantage to increase with code size, as the $Y$-sensitivity of the rotated code decreases faster than the $X$- or $Z$-sensitivity, until low-rate errors become the dominant source of logical failure, at which point high-rate errors are effectively suppressed.

To compare the performance of rotated and square codes with $Y$-biased noise, we
sample the logical failure rate across a full range of physical error probabilities for the square $9{\times}9$ code and the rotated $9{\times}9$ code with noise biases $\eta \in \{0.5, 10, 100, 1000, 10\,000, \infty\}$.
Sample means are taken over 30\,000 and 1\,200\,000 runs per bias and physical error probability for the square and rotated codes, respectively.
Since the noise is biased, we cannot use the $Y$-decoder (see Appendix~\ref{sec:y-decoder}) for exact maximum-likelihood decoding.  Instead, we use the tensor-network decoder of Ref.~\cite{Bravyi2014} for square codes and the tensor-network decoder of Sec.~\ref{sec:tn-decoding-rotated-codes} for rotated codes, both of which approximate maximum-likelihood decoding.
Both decoders are tuned via an approximation parameter $\chi$, which controls the trade-off between efficiency and convergence.
As explained in Sec.~\ref{sec:tn-decoding-rotated-codes}, the decoder adapted to rotated codes converges much more strongly with biased noise.
We choose $\chi=48$ for the square code decoder and $\chi=8$ for the rotated code decoder, such that both decoders converge, at the most challenging bias of $\eta=100$, to within less than half a standard deviation, relative to $\chi+8$, assuming a binomial distribution.
The use of relatively small codes ensures significant logical failure rates at low physical error probabilities and keeps computational requirements to a reasonable level.

\section{\label{sec:tn-decoding-rotated-codes}Improved tensor-network decoding of rotated codes with biased noise}

In this section, we describe how tensor-network decoding of the surface code under biased noise can be improved using the rotated surface code layout. We show that the rotated layout allows us to remove certain correlations present in the tensor network used for maximum-likelihood decoding~\cite{Bravyi2014}, allowing efficient and optimal decoding for pure $Y$ noise. The removal of such correlations greatly improves the efficiency of the decoder in the case of noise strongly biased toward $Y$, but with a small probability of $X$ and $Z$ errors, a situation previously shown to be challenging using the standard layout~\cite{Tuckett2018}.
Throughout this section, we refer to surface codes oriented as in Fig.~\ref{fig:rotated-code}(b), where shaded and blank faces correspond to $X$- and $Z$-type checks, respectively.

We briefly describe the approximate maximum-likelihood decoder proposed by Bravyi, Suchara, and Vargo in Ref.~\cite{Bravyi2014}.
Maximum-likelihood decoding for stochastic Pauli noise chooses the correction that has the highest probability of successfully correcting the error given an error syndrome, accounting for all errors consistent with that syndrome. If performed exactly it is, by definition, optimal.

The maximum-likelihood decoding algorithm in Ref.~\cite{Bravyi2014} is based on mapping coset probabilities to tensor-network contractions. The probability of a coset for an error $f$ is given by 
\begin{equation}
\label{e:coset_def}
\pi(f\mathcal{G})=\sum_{\alpha, \beta} T(\alpha; \beta), 
\end{equation}
where $T(\alpha; \beta)$ is defined as the probability of the Pauli error $f$ times the stabilizer $g(\alpha, \beta):=\prod_v (A_v)^{\alpha_v}\prod_p (B_p)^{\beta_p}$, where $\alpha_v, \beta_p\in \{0,1\}$ and the summation is over all bit strings $\alpha=\alpha_1, \alpha_2,\dots\alpha_{(n-1)/2}$ and $\beta=\beta_1, \beta_2,\dots\beta_{(n-1)/2}$. Because of the local structure of the surface code, this summation can be expressed as the contraction of a square-lattice tensor network. While there is some freedom in how the tensor network for a given coset can be defined on both the standard and rotated surface code layouts, we illustrate how a particular definition of the tensor network on the rotated surface code layout can result in significantly more efficient decoding of biased noise.

\begin{figure}
    \includegraphics[width=0.45\textwidth]{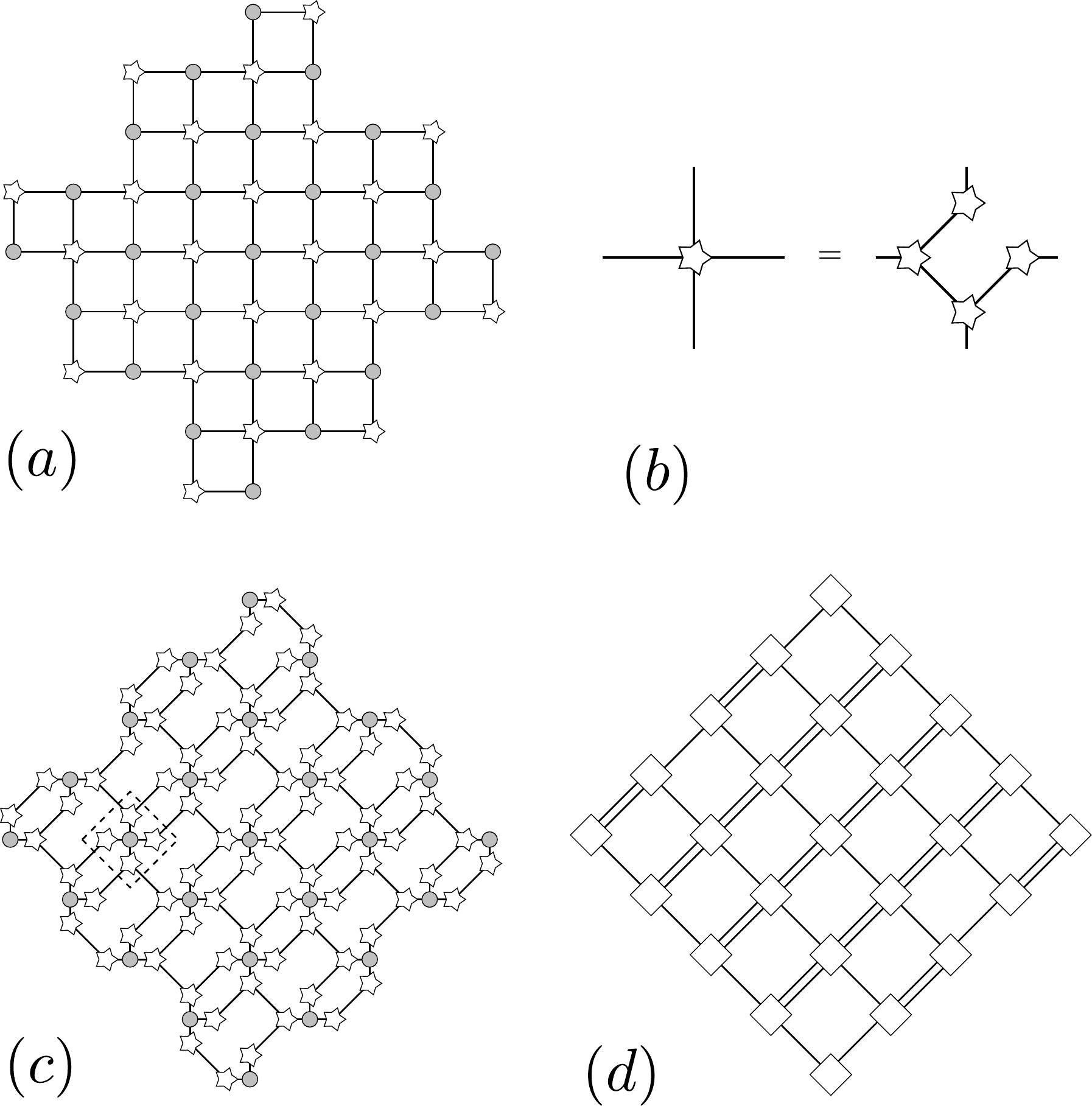}
    \caption{(a) Tensor network representing a coset probability for a rotated code. It consists of qubit tensors (circles) and check tensors (stars). The layout in (a) is obtained by applying the construction of Ref.~\cite{Bravyi2014} to the rotated code without modification. (b) Splitting a check tensor into multiple check tensors. This splitting is possible because check tensors take the value one if all indices are identical and zero otherwise. (c) A modified tensor network representing a coset probability where a single cell is outlined by a dashed box. This network is obtained from (a) by splitting check tensors. (d) Final modified tensor network obtained by contracting tensors in cells together to form merged tensors (squares). In the discussion of the contraction of this tensor network, we imagine rotating the network anticlockwise by $45^\circ$ and contracting from left to right. Note that this tensor network is not isotropic: In this rotated frame, the bond dimension is 2 for horizontal edges and is 4 for most vertical edges (except on the boundary).}
    \label{f:bevel_tn}
\end{figure}

A complete description of the tensor network that leads to more efficient decoding is provided in Fig.~\ref{f:bevel_tn}. We highlight the essential features that give rise to the improved decoding performance. In this layout, every tensor corresponds to a physical qubit, and a horizontal edge between columns $i$ and $i+1$ corresponds to a unique check that acts nontrivially on qubits in both of these columns. We illustrate the correspondence between checks and tensor-network edges on a $5\times5$ rotated code in Fig.~\ref{f:coordinates}.
\begin{figure}
  \includegraphics[width=0.4\textwidth]{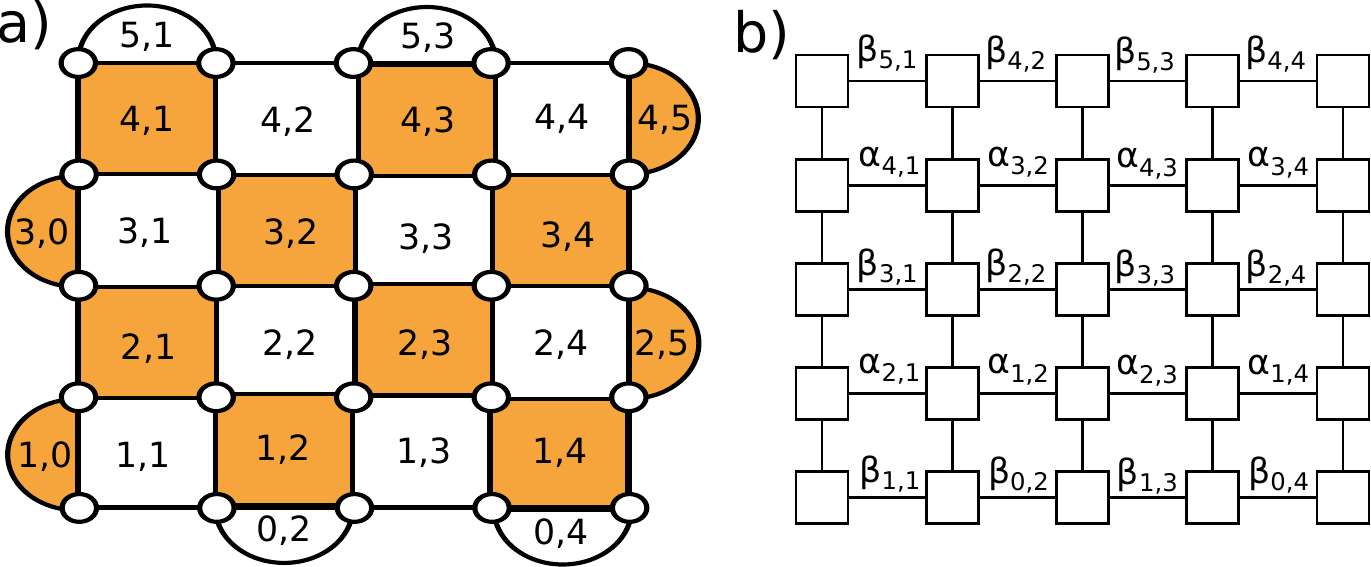}
  \caption{(a) Check coordinates are assigned to each check in the rotated layout. (b) The tensor network is defined such that each horizontal edge corresponds to a specific check. The $\alpha$ indices correspond to $X$ checks, and the $\beta$ indices correspond to $Z$ checks, with the subscripts indicating the check coordinates. Each tensor corresponds to a specific qubit. The bond dimension of horizontal edges is 2, while the bond dimension of the vertical edges is 4.}
  \label{f:coordinates}
\end{figure}

For certain structured instances of this problem, corresponding to independent $X$ or $Z$ flips, an efficient algorithm for contracting the network is known~\cite{Bravyi2014}. However, there is no known efficient algorithm for exact contraction of the network in the case of general local Pauli noise. 

In this case, an approximate method for evaluating the tensor-network contraction is used~\cite{Bravyi2014}. In this method, the leftmost boundary of the tensor network is treated as a matrix product state (MPS). Columns of the tensor network, which take the form of matrix product operators, are successively applied to the MPS until there are no columns left and the entire lattice is contracted. 

An approximation is used to keep this calculation tractable. After each column is applied, the singular value decomposition is used to reduce the size of the tensors in the MPS, effectively keeping only the $\chi$ largest Schmidt values for each bipartition of the chain and setting the remainder to zero. Without such a truncation, the number of parameters describing the tensors increases exponentially in the number of columns applied to the MPS. The parameter $\chi$ can be controlled independently, with larger $\chi$ improving accuracy at an increased computational cost. The overall runtime of the algorithm is $\mathcal{O}(n\chi^3)$. 

Surprisingly, on the rotated code with the tensor network described above, there is no entanglement in the boundary MPS for pure $Y$ noise. In other words, the MPS decoder is exact for $\chi=1$, independent of system size. This result is in contrast to the standard layout, where $\chi\sim 48$ is required for a reasonable approximation to coset probabilities on a $21\times21$ system~\cite{Tuckett2018}.

As we explain in the following section, this improvement can be attributed to the boundary conditions of the code and the layout of the tensor network. While exact decoding of $Y$ noise can also be performed using methods described in Appendix~\ref{sec:y-decoder}, the MPS decoder can be extended easily to noise that is mostly $Y$ noise but with nonzero probability of $X$ and $Z$ errors.  Our convergence results (see Sec.~\ref{sec:thresholds-biased}) show that there is a substantial improvement in the performance over the standard method when applied to this type of noise.

We observe a similar improvement in performance using the tensor-network decoder described in Ref.~\cite{Darmawan2018} when defined on the rotated layout and with an analogous tensor-network layout. Exact decoding is achieved with $\chi=4$ for pure $Y$ noise, which is not as efficient as the improved MPS decoder described above but substantially more efficient than the MPS decoder on the standard layout.

We remark that, on the standard layout, changing from a square lattice to coprime does little to improve the performance of the MPS decoder. Since the contraction algorithm proceeds column by column, a $21\times21$ (square) code and a $21\times22$ (coprime) code have an identical boundary MPS after the first $20$ columns are contracted if the same error is applied to qubits in these columns. Thus, we expect the error resulting from the truncation of small singular values during the contraction to be at least as bad for the $21\times22$ code as the $21\times21$ code.

\subsection{Boundary entanglement in MPS decoder}
We show that the boundary MPS of the rotated code with the above tensor-network layout is unentangled in the case of infinite bias. The boundary MPS appearing in the contraction algorithm is a (generally approximate) representation of the ``boundary state", obtained by contracting all indices of the network up to a given column and leaving the right-going indices of that column uncontracted. More precisely, we define $\psi(\alpha^R; \beta^R)$ to be the contraction of the network up to the $j<L$ column, with the right-going boundary indices set to $\alpha^R; \beta^R$. The $L$-qubit boundary state is defined as $\ket{\psi^R}:=\sum_{\alpha^R; \beta^R}\psi(\alpha^R; \beta^R)\ket{\alpha^R; \beta^R}$. We illustrate such a boundary state in Fig.~\ref{f:check_error_config_tn}(a). Let $Q_j$ be the set of qubits in columns up to and including column $j$. As previously described, each boundary index in $\alpha^R=\alpha_{1,j}, \alpha_{3,j}, \dots, \alpha_{L-2,j}$ and $\beta^R=\beta_{0,j}, \beta_{2,j}, \dots, \beta_{L-1,j}$ corresponds to a check acting nontrivially on qubits in columns $j$ and $j+1$, where the check subscripts here are for odd $j$ (for even $j$, simply add 1 to every row index).

We call checks that act only nontrivially on qubits contained in $Q_j$ bulk checks and refer to them using the indices $\alpha^B; \beta^B$, with superscript $B$. We refer to a specific $\alpha^R; \beta^R$ as a boundary configuration and a specific $\alpha^B; \beta^B$ as a bulk configuration. We define $\mathcal{G}'\subseteq \mathcal{G}$ to be the set of stabilizer elements that act nontrivially only on $Q_j$ and $g(\alpha^B; \beta^B)\in \mathcal{G}'$ to be the stabilizer element corresponding to the bulk configuration $\alpha^B; \beta^B$. We define $h(\alpha^R; \beta^R)$ to be the product of boundary checks corresponding to the boundary configuration $\alpha^R; \beta^R$ whose action is restricted to qubits in $Q_j$ (so the action on qubits in column $j+1$ is ignored).
The fact that $\psi(\alpha^R; \beta^R)$ is calculated by contracting all indices to the left of column $j$ means that all bulk configurations $\alpha^B; \beta^B$ are summed over, like in Eq.~\eqref{e:coset_def} but restricted to checks in the first $j$ columns. So we can write
\begin{equation}
\label{e:coset_reduced}
\psi(\alpha^R; \beta^R)=\pi'(f' \mathcal{G}') = \sum_{\alpha^B, \beta^B} T'(\alpha^B; \beta^B;\alpha^R; \beta^R), 
\end{equation}
where $\pi'$, $f'$, and $T'$, respectively, correspond to versions of $\pi$, $f$, and $T$ that are restricted to $Q_j$. Specifically, $f'$ is the Pauli error $f$ restricted to $Q_j$, and $T'(\alpha^B; \beta^B;\alpha^R; \beta^R)$ is the probability of the Pauli error $f'g(\alpha^B;\beta^B)h(\alpha^R;\beta^R)$ on qubits in $Q_j$. We illustrate an example error $f'$, bulk configuration $\alpha^B; \beta^B$, and boundary configuration $\alpha^R; \beta^R$ in Fig.~\ref{f:check_error_config_tn}. The coset probability $\pi'$ is likewise restricted to qubits $Q_j$, with the boundary checks fixed.

\begin{figure}
  \includegraphics[width=0.45\textwidth]{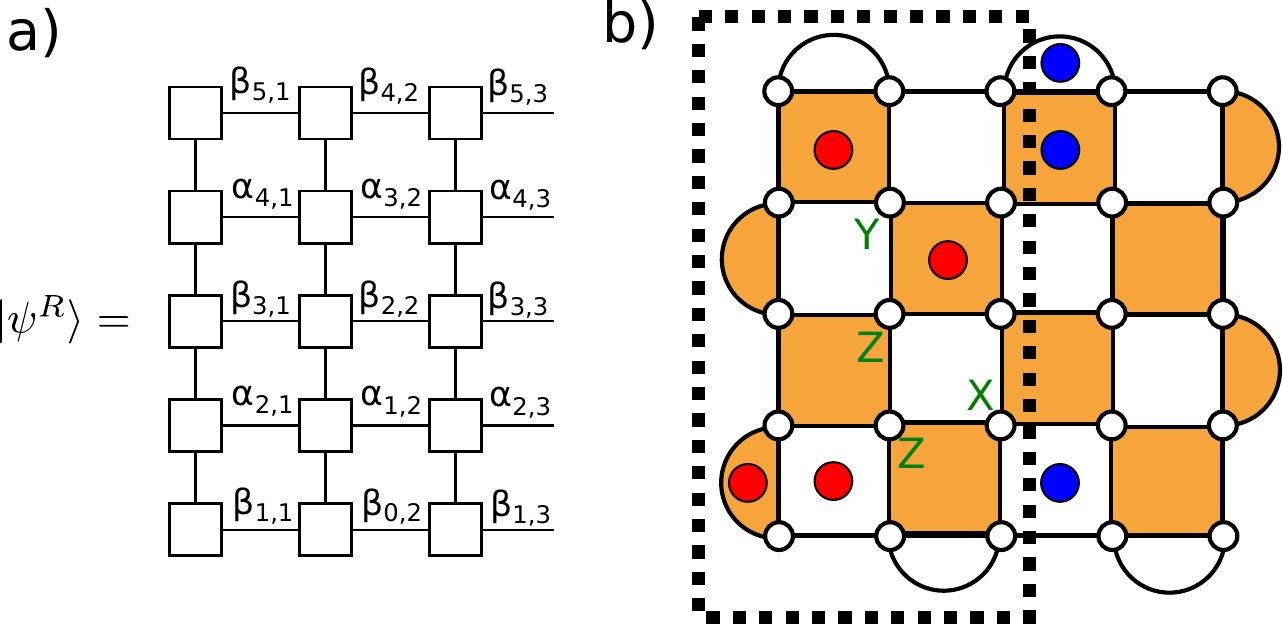}
  \caption{(a) A boundary state obtained by contracting the network up to a given column, and leaving the right-going indices of that column uncontracted. (b) An example check and error configuration illustrated for calculating the boundary state for the third column $j=3$ of a $5\times5$ code with rotated layout. A bulk configuration $\alpha^B;\beta^B$ is represented by the red dots, and a boundary configuration $\alpha^R;\beta^R$ is represented by the blue dots (where a dot on a check indicates that the check variable $\alpha_{i,k}$ or $\beta_{i,k}$ is 1). An error $f'$ is represented by green letters. Note that for this calculation we consider only the action of the checks and error on qubits in the first three columns $Q_3$, inside the dashed box. So the action on the boundary checks on qubits outside the box is ignored. The quantity $T'(\alpha^B;\beta^B;\alpha^R;\beta^R)$ is the probability of the product of all dotted checks and the error $f'$ in the dashed box. In the example configuration depicted above, this product contains four $X$, six $Y$ and two $Z$ errors, giving $T'(\alpha^B;\beta^B;\alpha^R;\beta^R)=p_X^4p_Y^6p_Z^2p_I^3$. In order to calculate the boundary state, all possible configurations of bulk checks must be summed over.  In the special case of pure $Y$ noise, where $p_X=p_Z=0$, only at most one term in this sum is nonzero for any boundary configuration. }
  \label{f:check_error_config_tn}
\end{figure}

In the case of pure $Y$ noise, the summation on the right-hand side of Eq.~\eqref{e:coset_reduced} simplifies dramatically. In fact, for any given choice of boundary variables $\alpha^R; \beta^R$ and error $f'$, there is at most one choice of $\alpha^B$ and $\beta^B$ such that $T'(\alpha^B; \beta^B;\alpha^R; \beta^R)$ is nonzero. So, given $\alpha^R; \beta^R$ and $f'$, either $\psi(\alpha^R; \beta^R)$ is zero or there exists a unique $\alpha^B, \beta^B$ such that 
\begin{equation}
\psi(\alpha^R; \beta^R)= T'(\alpha^B; \beta^B;\alpha^R; \beta^R).
\end{equation}
For a given $f'$ and $\alpha^R, \beta^R$, we say that a qubit is ``satisfied" for a given check configuration $\alpha^B; \beta^B$ if $f'g(\alpha^B; \beta^B)h(\alpha^R; \beta^R)$ acts on every qubit as either $I$ or $Y$ and not $X$ or $Z$. For pure $Y$ noise,  in order for $T'(\alpha^B; \beta^B, \alpha^R; \beta^R)$ to be nonzero, all qubits in $Q_j$ must be satisfied. We can solve for a bulk configuration $\alpha^B; \beta^B$ that satisfies all qubits, if one exists, by fixing check variables to satisfy qubits one at a time, starting from the qubit adjacent to the two-qubit boundary check in column $j$. There is only one bulk check adjacent to this qubit; therefore, only one choice for the corresponding check variable will satisfy that qubit. This fixes the first bulk check. 
We then proceed down this column to fix every check variable in the same manner.  With the check configuration in column $j$ determined, we then solve for checks in columns $j-1$, $j-2$, etc., in the same way until all check variables are determined, thereby solving for the bulk configuration $\alpha^B; \beta^B$. 

Note that, for certain $f'$ and $\alpha^R; \beta^R$, there may be no configuration of bulk checks that will satisfy all qubits, which implies that the $f'$ and $\alpha^R; \beta^R$ are not compatible with pure $Y$ noise, i.e., $\psi(\alpha^R, \beta^R)=0$. In fact, only a few special boundary configurations are compatible with pure $Y$ noise. We describe the boundary configurations $\alpha^R; \beta^R$ that are compatible with a given $f$, starting with the case of the trivial coset $f=I$. We show that the allowed bulk and boundary configurations consist of horizontal strings which terminate at two-qubit $X$ checks on the left code boundary, as shown in Fig.~\ref{f:bulk_boundary}. Other cosets (with $f\ne I$) follow straightforwardly from this.

\begin{figure}
  \includegraphics[width=0.5\textwidth]{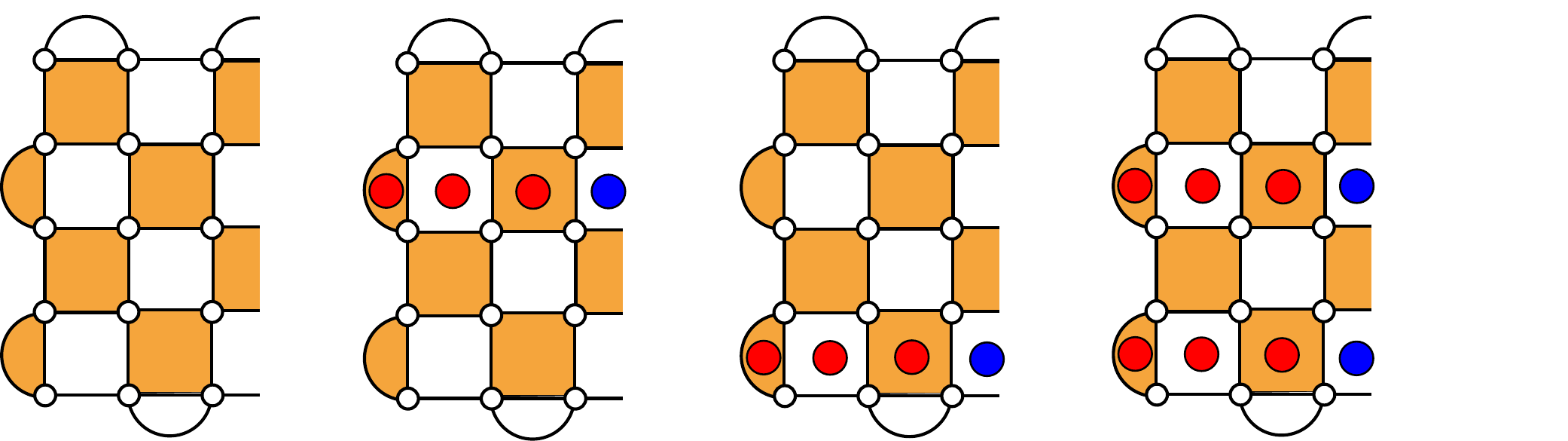}
  \caption{All allowed bulk and boundary configurations for $Y$ noise illustrated for the boundary state for the third column $j=3$ of a $5\times 5$ code on the rotated layout for the trivial coset $f'=I$. The product of the dotted checks must result in only $I$ and $Y$ errors on $Q_3$ (and no $X$ or $Z$ errors). Blue dots represent the boundary configuration, while red dots represent the corresponding bulk configuration. Blue dots must be connected to a two-qubit check on the left boundary by a string of red dots. The fact that these strings never overlap and are absorbed at the left boundary implies that the boundary variables are uncorrelated, and, therefore, there is no entanglement in the boundary state.}
  \label{f:bulk_boundary}
\end{figure}

We start from the left-hand side of the code and try to find bulk configurations that satisfy all qubits. We work our way up the column, finding relations between checks. We use the convention that qubit $(i, k)$ refers to the qubit on the bottom-left vertex of the check (face) with coordinates $(i,k)$, as in Fig.~\ref{f:coordinates}. First, in order to satisfy qubit $(1,1)$, we require $\alpha_{1, 0}=\beta_{1,1}$. With the parity of these checks fixed, in order to satisfy qubit $(2,1)$, we need $\alpha_{1,1}=0$. Then, to satisfy qubit $(3,1)$ we require $\alpha_{3, 0}=\beta_{3,1}$. Continuing up the column, we see that $\alpha_{i,0}=\beta_{i,1}$ for $i$ odd and $\alpha_{i, 1}=0$ for $i$ even, and the two-qubit $Z$ check at the end of the column satisfies $\beta_{(L+1)/2, 1}=0$. We can then solve for checks in the next column, finding $\alpha_{i,2}=\beta_{i,1}$ for odd $i$, $\beta_{i,2}=0$ for even $i$, and $\beta_{2, 0}=0$ for the two-qubit check on the lower boundary. 

Proceeding in this manner, we can solve for all the checks up to any given column. For the trivial coset, the bulk and boundary configurations satisfy the following:
\begin{itemize}
  \item All check variables in a given row must be take the same value.
  \item Check variables in rows terminated by a two-qubit $X$ check (odd rows) may take values 0 or 1. The remaining checks must take the value 0. 
\end{itemize}
We can easily calculate the probability of each satisfying check configuration. First, the trivial configuration $\alpha^R; \beta^R=0;0$ corresponding to the bulk configuration $\alpha^B; \beta^B=0;0$, i.e., with all bulk and boundary check variables set to $0$, has probability $(p_I)^{jL}$. Flipping any odd boundary check flips the corresponding row of checks in the bulk and introduces $2j$ $Y$ errors, changing the probability by a factor of $(p_Y/p_I)^{2j}$.  The fact that the weight introduced by flipping any row does not depend on which other rows are flipped implies that the boundary variables are independent and $\ket{\psi^R}$ is a product state which can be explicitly written as 
\begin{equation}
\ket{\psi^R}=\ket{0}_{\rm{end}}\bigotimes_{k\,\rm{even}}\ket{0}_k \bigotimes_{l \,\rm{odd}} \ket{\theta}_l,
\end{equation}
where $\ket{\theta}=p_I^{2j}\ket{0}+p_Y^{2j}\ket{1}$ and $\ket{0}_{\rm{end}}$ corresponds to the two-qubit $Z$-check at the end of the column. Since the boundary state for the trivial coset $f'=I$ is completely unentangled, the tensor network corresponding to this coset can be contracted exactly with $\chi=1$.

The case of a nontrivial coset with $f'\neq I$ is analogous to the case of a trivial coset. Starting from any satisfying bulk and boundary configuration, we  obtain all other satisfying bulk and boundary configurations by flipping odd rows of checks, as in the trivial coset. If we assume there exist satisfying bulk and boundary configurations ${\alpha^R}'; {\beta^R}'$ and ${\alpha^B}'; {\beta^B}'$, respectively, for a given error $f'$, the boundary state can be explicitly written as 
\begin{equation}
\ket{\psi^R}=\ket{{\beta^R}_{\rm{end}}'}\bigotimes_{k\,\rm{even}}\ket{{\gamma^R}_k'} \bigotimes_{l \,\rm{odd}} \ket{\theta(l)}_l,
\end{equation}
where $\ket{\theta(l)}=p_Y^{N(l)} p_I^{2j- N(l)}\ket{0}+p_Y^{2j- N(l)} p_I^{N(l)} \ket{1}$, $N(l)$ is the number of qubits on which $Y$ is applied in the rows adjacent to the $l$th row of checks when the boundary variable for row $l$ is 0 and where ${\gamma^R}'={\alpha^R}'$ for odd $j$ and ${\gamma^R}'={\beta^R}'$ for even $j$, and ${\beta^R_{\rm end}}'$ corresponds to the two-qubit check at the end of the column.

Therefore, using the tensor-network layout described above, any coset can be calculated exactly using the MPS decoder with $\chi=1$, which is a particular property of the physical boundary conditions of the code. In this case described above, starting from a vacuum (with all checks unflipped), flipping a boundary check results in a line of checks being flipped through the bulk, which is absorbed by a two-qubit check on the boundary. We call such a line of flipped check variables a ``lineon".

While we can define the tensor network analogously for the standard surface code layout, the boundary state does not have the same product state form. We find that the three-qubit boundary checks result in long-range correlations in the boundary state, which is because the three-qubit checks reflect lineons rather than absorb them, as illustrated in Fig.~\ref{f:standard_layout}. 
\begin{figure}
  \includegraphics[width=0.5\textwidth]{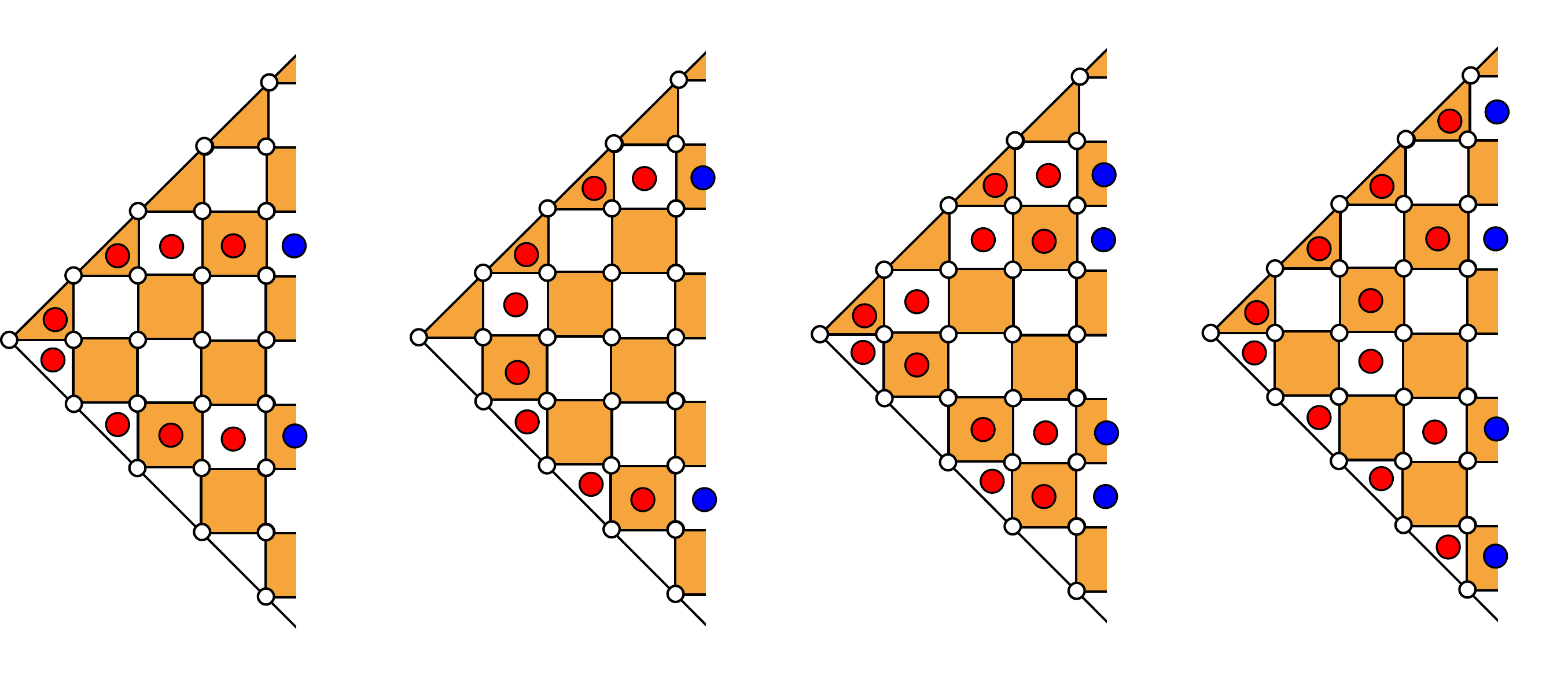}
  \caption{Some examples of boundary and bulk configurations for the standard layout of the surface code with three-qubit checks on the boundary for the trivial coset $f'=I$. The lineons travel in straight lines through the four-qubit bulk checks, but the three-qubit boundary checks have the effect of reflecting them by $\ang{90}$, such that they emerge on the right boundary on the exact opposite side. Therefore, each boundary variable is perfectly correlated with the boundary variable on the exact opposite side. Separate lineons can also cross paths, resulting in a cancellation of the bulk variables. Also, for neighboring pairs of lineons, $Y$ on the qubits shared between them cancel. These all result in correlations between the boundary variables and, therefore, entanglement in the boundary MPS.}
  \label{f:standard_layout}
\end{figure}
This result means that separated pairs of boundary checks must be flipped together. Also, when distinct lineons travel next to each other or cross, there is a cancellation of $Y$ errors. The consequence of this cancellation is that the probability of a particular lineon depends on whether other lineons are present, which results in correlations between boundary variables and entanglement in the boundary state. The rotated layout with two-qubit checks does not suffer from these problems. The lineons never cross; they are always separated by a row and are absorbed at the boundary.

To summarize this section, we show that the MPS decoder adapted to the rotated layout is exact with $\chi=1$ for pure $Y$ noise. This result is due to the fact that many correlations in the tensor network are eliminated in this case, making contraction of the tensor network much more efficient. This decoder can also take into account finite bias (i.e., nonzero $p_X$ and $p_Z$), and the improvement in efficiency also carries over to this case, as the numerical results of Sec.~\ref{sec:thresholds-biased} show.

\section{\label{sec:discussion}Discussion}

In this paper, we describe the structure of the surface code with pure $Y$ noise and show that this structure implies a $50\%$ error threshold and a significant performance advantage in terms of logical failure rate with coprime and rotated codes compared to square codes.
Furthermore, we provide numerics confirming our analytical results with pure $Y$ noise and demonstrating the performance advantage of rotated codes with $Y$-biased noise.
It is important to note that our results apply equally to pure $Z$ noise, i.e., dephasing noise, and the $Z$-biased noise prevalent in many quantum architectures, through the simple modification~\cite{Tuckett2018} of the surface code that exchanges the roles of $Z$ and $Y$ operators in stabilizer and logical operator definitions.
We, therefore, identify and characterize the features of surface codes that contribute to their ultrahigh thresholds with $Z$-biased noise and to the improvements in logical failure rate with coprime and rotated codes demonstrated in this paper.

In the limit of pure $Y$ noise, we show that the standard surface code is equivalent to a concatenation of classical codes: a single top-level cycle code and a number of bottom-level repetition codes.
We show that this implies the surface code with pure $Y$ noise has a threshold of $50\%$ and, for $j{\times}k$ surface codes with small $g = \gcd(j,k)$, the more effective repetition code dominates, leading to a reduction in logical failure rate.
In terms of logical operators, we show that $Y$-type logical operators are rarer and heavier than $X$- or $Z$-type equivalents, and coprime codes, in particular, have only one $Y$-type logical operator, and its weight is $O(n)$.
We also show that rotated codes, with odd linear dimensions, are closely related to coprime codes, admitting a single $Y$-type logical operator of optimal weight $n$.

We confirm, numerically, the $50\%$ error threshold of the surface code with pure $Y$ noise and demonstrate that coprime and rotated codes with pure $Y$ noise significantly outperform similar-sized square codes in terms of logical failure rates such that a target logical failure rate may be achieved with quadratically fewer physical qubits using coprime and rotated codes.
Furthermore, we demonstrate that this advantage persists with $Y$-biased noise.
In particular, we find that a \emph{smaller} rotated code, with approximately half the number of physical qubits, outperforms a square code, over a wide range of physical error probabilities, for biases as low as $\eta=100$, where $Y$ errors are 100 times more likely that $X$ or $Z$ errors.
We argue that, for a given bias, the relative advantage of coprime and rotated codes over square codes increases with code size, until low-rate errors become the dominant source of logical errors and high-rate errors are effectively suppressed.

Leveraging features of the structure of rotated codes with pure $Y$ noise, we define a tensor-network decoder that achieves exact maximum-likelihood decoding with pure $Y$ noise and converges much more strongly with $Y$-biased noise than the  decoder of Ref.~\cite{Bravyi2014}, from which it is adapted.
With this decoder, we are able to improve upon the results of Ref.~\cite{Tuckett2018} and provide strong evidence that the threshold error rate of surface codes tracks the hashing bound exactly for all biases, addressing an open question from Ref.~\cite{Tuckett2018}.
Saturating this bound is a remarkable result for a practical topological code limited to local stabilizers.

Although our analytical results focus on features of the surface code with pure $Y$ noise, it is interesting to put our observations of the performance of surface codes with biased noise in the context of other proposals to adapt quantum codes to biased noise~\cite{Ioffe2007, Sarvepalli2009, LaGuardia2014, Robertson2017, Aliferis2008, Aliferis2009, Stephens2013, Stephens2008, Napp2013, Brooks2013, Li2018, Rothlisberger2012, Xu2018}.
Several proposals have been made for constructing asymmetric quantum codes for biased noise from classical codes~\cite{Ioffe2007, Sarvepalli2009, LaGuardia2014, Robertson2017} (see Ref.~\cite{LaGuardia2014} for an extensive list of references), but of particular interest here are approaches that can be applied to topological codes.
A significant increase in threshold with biased noise has been demonstrated by concatenating repetition codes at the bottom level with another, possibly topological, code at the top level~\cite{Aliferis2008, Aliferis2009, Stephens2013}; interestingly, this construction mirrors the structure we find to be inherent to the surface code.
Performance improvements with biased noise have also been demonstrated by modifying the size and shape of stabilizers in Bacon-Shor codes~\cite{Stephens2008, Napp2013, Brooks2013} and surface and compass codes~\cite{Li2018}, by randomizing the lattice of the toric code~\cite{Rothlisberger2012} or by concatenating a small $Z$-error detection code to the surface code~\cite{Xu2018}. These approaches are distinct from the use of coprime or rotated codes (with the modification of Ref.~\cite{Tuckett2018}), which maintain the size and locality of surface code stabilizer generators, and so they could potentially be combined to yield further performance improvements.

Looking forward, the identified features of surface codes and the insights behind them suggest several interesting avenues of research.
For the surface code, specifically, different geometries may be more robust to logical errors than coprime and rotated codes in the high-bias regime, where a few well-placed $X$ and $Z$ errors can combine with strings of $Y$ errors to produce more common, lower-weight logical operators.
Similarly, certain geometries of surface code used to encode multiple qubits~\cite{Delfosse2016} may or may not maintain the high performance of simple surface codes with biased noise.
For topological codes, more generally, one can ask which codes exhibit an increase in performance with biased noise and what are the family traits of such codes; we have seen, for example, that the standard triangular 6.6.6 color code does not exhibit an increase in performance.
(Although this color code is equivalent, in some sense, to a folded surface code~\cite{Kubica2015}, the mapping that relates the two does not preserve the biased noise model.)

Finally, although this paper focuses on features of surface codes with $Y$ or $Y$-biased noise rather than the issue of fault-tolerant decoding, our numerical results motivate the search for fast fault-tolerant decoders for the surface code with biased noise.
The highly significant question of whether the high performance of surface codes with biased noise can be preserved in the context of fault-tolerant quantum computing, is addressed in a forthcoming paper~\cite{Tuckett2019}, where a fast but suboptimal decoder for tailored surface codes achieves fault-tolerant thresholds in excess of $5\%$ with biased noise.
Investigating the optimal fault-tolerant thresholds with biased noise and the performance well below threshold remain important avenues of research.

\vspace*{1em}
\begin{acknowledgments}

This work was supported by the Australian Research Council (ARC) via the Centre of Excellence for Engineered Quantum Systems (EQUS) Project No. CE170100009 and Discovery Project No. DP170103073.
S.B. acknowledges support from the IBM Research Frontiers Institute. 
A.S.D. was supported by JST, PRESTO Grant No. JPMJPR1917, Japan.
Access to high-performance computing resources was provided by the National Computational Infrastructure (NCI), which is supported by the Australian Government, and by the Sydney Informatics Hub, which is funded by the University of Sydney.
Some of the numerical computation in this work was carried out at the Yukawa Institute Computer Facility.

\end{acknowledgments}

\appendix

\section{\label{sec:color-thresholds}Color-code thresholds with biased noise}

We demonstrate that the threshold of the triangular 6.6.6 color code~\cite{Bombin2006} decreases when the noise is biased.
This result is in stark contrast to the surface code, which exhibits a significant increase in threshold with biased noise~\cite{Tuckett2018}.
Our results are summarized in Fig.~\ref{fig:color-threshold-v-bias}, in which, we contrast our results for the color code with those for the surface code, reproduced from Sec.~\ref{sec:thresholds-biased}.
From statistical physics arguments, the optimal error threshold for the unmodified surface code with pure $Z$ noise is estimated to be $10.93(2)\%$~\cite{Dennis2002,Merz2002}, and with depolarizing noise it is estimated to be $18.9(3)\%$~\cite{Bombin2012}.
The color code has similar error thresholds \cite{Katzgraber2009,Bombin2012} to the surface code with pure $Z$ noise and depolarizing noise.
Our results for the color code, using an approximate maximum-likelihood decoder, reveal a decrease in threshold with $Y$-biased noise: $18.7(1)\%$ with standard ($\eta=0.5$) depolarizing noise, $13.3(1)\%$ with bias $\eta=3$, $11.4(2)\%$ with bias $\eta=10$, $10.6(2)\%$ with bias $\eta=100$, and $10.5(2)\%$ in the limit of pure $Y$ noise.
In contrast, our results for the surface code, from Sec.~\ref{sec:thresholds-biased}, reveal a significant increase in threshold with $Y$-biased noise: $18.8(2)\%$ with standard ($\eta=0.5$) depolarizing noise, $22.3(1)\%$ with bias $\eta=3$, $28.1(2)\%$ with bias $\eta=10$, $39.2(1)\%$ with bias $\eta=100$, and the analytically proven $50\%$ threshold in the limit of pure $Y$ noise; see Sec.~\ref{sec:y-threshold}.
Our decoder implementation and numerics are described below.
The features of surface codes that contribute to their exceptional performance with biased noise are discussed in the body of the paper.

\begin{figure}[ht]
  \includegraphics{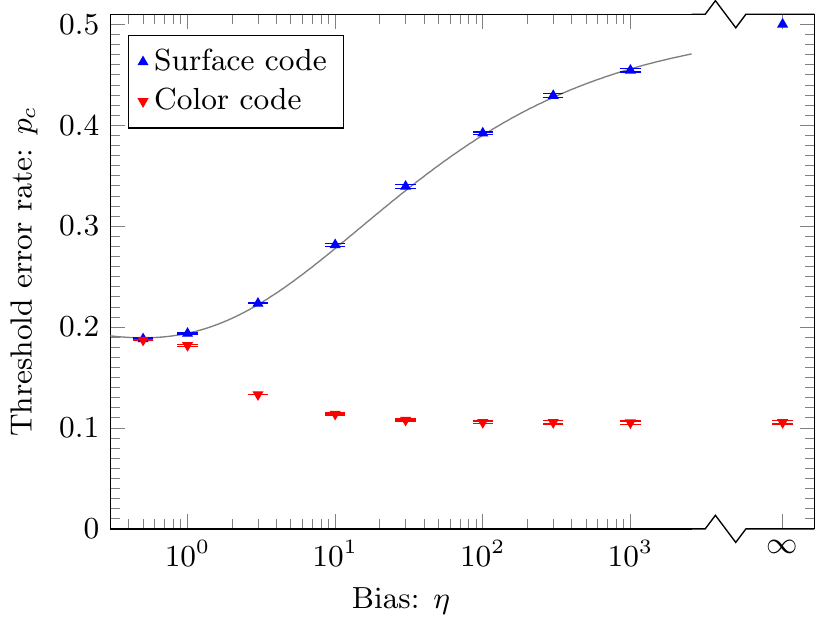}
  \caption{\label{fig:color-threshold-v-bias}
    Threshold error rate $p_c$ as a function of bias $\eta$.
    Red inverted triangles show threshold estimates for the triangular 6.6.6 color code.
    For comparison, blue triangles show threshold estimates for the surface code (reproduced from Sec.~\ref{sec:thresholds-biased}), with the point at infinite bias, i.e., only $Y$ errors, indicating the analytically proven $50\%$ threshold.
    Error bars indicate one standard deviation relative to the fitting procedure.
    The gray line is the hashing bound for the associated Pauli error channel.
  }
\end{figure}

\paragraph*{Decoder.---}
In order to take account of correlations between $X$- and $Z$-type stabilizer syndromes, we implement a tensor-network approximate maximum-likelihood decoder for triangular 6.6.6 color codes following the same principles as the tensor-network decoder of Ref.~\cite{Bravyi2014} used in Ref.~\cite{Tuckett2018} for surface codes.

Consider a color code with $n$ physical qubits and $m$ independent stabilizer generators.
Let $\mathcal{P}$ denote the group of $n$-qubit Pauli operators, let $\mathcal{G}$ denote the stabilizer group, and recall that the centralizer of $\mathcal{G}$ is given by $\mathcal{C(G)} = \{ f \in \mathcal{P} : fg=gf\ \forall\ g \in \mathcal{G} \}$.
If the result of measuring the stabilizer generators is given by syndrome $s \in \{0,1\}^m$ and $f_s \in \mathcal{P}$ is some fixed Pauli operator with syndrome $s$ then the set $f_s\mathcal{C(G)}$ of all Pauli operators with syndrome $s$ is the disjoint union $f_s\mathcal{C(G)} = f_s\mathcal{G} \cup f_s\overline{X}\mathcal{G} \cup f_s\overline{Y}\mathcal{G} \cup f_s\overline{Z}\mathcal{G}$, where $\overline{X}$, $\overline{Y}$ and $\overline{Z}$ are the logical operators on the encoded qubit.

For a given syndrome $s$ and probability distribution $\pi$ on the Pauli group, the maximum-likelihood decoder can be implemented by constructing a candidate recovery operator $f_s$ consistent with $s$ and returning $\text{arg max}_f\ \pi(f\mathcal{G})$ where $f \in \{ f_s, f_s\overline{X}, f_s\overline{Y}, f_s\overline{Z} \}$ and $\pi(f\mathcal{G}) = \sum_{g \in \mathcal{G}}{\pi(fg)}$.

By analogy with the decoder of Ref.~\cite{Bravyi2014} for the surface code, we define a tensor network whose exact contraction yields the coset probability $\pi(f\mathcal{G})$ for the color code. Figures~\ref{fig:color-tensor-network}(a) and \ref{fig:color-tensor-network}(b) illustrate a distance-5 color code, whereas Figure~\ref{fig:color-tensor-network}(c) illustrates a tensor network with the same layout of qubits and stabilizers.
Bonds have dimension 4.
Stabilizer tensors are defined such that each element has a value of 1 if all indices are identical and a value of 0 otherwise.
Qubit tensors are defined such that each element has the single-qubit probability $\pi$ of the product of the restriction of $f$ to that qubit with the Paulis associated with bond indices where indices map to Paulis as $0 \mapsto I$, $1 \mapsto X$, $2 \mapsto Y$, and $3 \mapsto Z$.
In this way, all possible combinations of stabilizers are applied to $f$, and the exact contraction of such a tensor network yields the coset probability $\pi(f\mathcal{G})$.

\begin{figure}[ht]
  \includegraphics{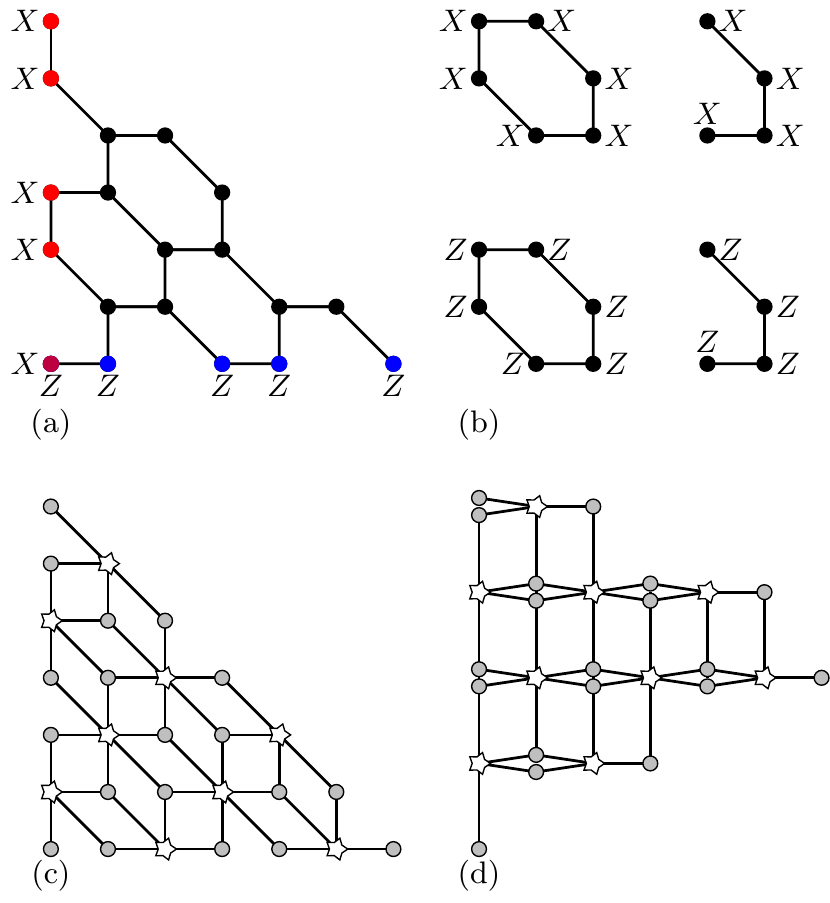}
  \caption{\label{fig:color-tensor-network}
    (a) Distance-5 triangular 6.6.6 color code with logical operators given by a product of $Z$ along the bottom edge and a product of $X$ along the left edge.
    (b) Color-code stabilizers.
    (c) Tensor network corresponding to the coset probability of a distance-5 color code; gray disks represent qubit tensors; white stars represent stabilizer tensors; and lines represent bonds.
    (d) Equivalent tensor network as a square lattice.
  }
\end{figure}

The exact contraction of the tensor network is inefficient with a runtime exponential in the number of qubits $n$.
However, by merging neighboring qubit tensors in pairs, the tensor network can be transformed into a square lattice [see Fig.~\ref{fig:color-tensor-network}(d)] so that techniques, used in the decoder of Ref.~\cite{Bravyi2014}, can be applied to efficiently approximate the coset probability.
The approximation is controlled by a parameter $\chi$ which defines the maximum bond dimension retained as the tensor network is contracted.
We refer the reader to Ref.~\cite{Bravyi2014} for full details of the approximate contraction algorithm.
We find that the performance of the decoder converges well for $\chi=36$ across all noise biases, see below.

\paragraph*{Numerics.---}
We follow the general approach taken in Ref.~\cite{Tuckett2018}; we give a brief summary here and refer the reader to Ref.~\cite{Tuckett2018} for full details.
We use triangular 6.6.6 color codes of distances $d=7, 11, 15,$ and $19$.
We estimate the threshold for biases $\eta=0.5, 1, 3, 10, 30, 100, 300, 1000, \infty$, where $\eta=p_Y/(p_X+p_Z)$ and $p_X=p_Z$, such that $\eta=0.5$ corresponds to standard depolarizing noise and $\eta=\infty$ corresponds to pure $Y$ noise (see Sec.~\ref{sec:definitions}).
We approximate maximum-likelihood decoding using the decoder, described above, with approximation parameter $\chi=36$.
The decoder converges well (generally better than in Ref.~\cite{Tuckett2018}) across the full range of biases with the weakest convergence in the low-bias regime, see Fig.~\ref{fig:color-convergence}.
We run 30\,000 simulations per code distance and physical error probability.
As in Ref.~\cite{Tuckett2018}, we use the critical exponent method of Ref.~\cite{Wang2003} to obtain threshold estimates with jackknife resampling over the code distances to determine error bounds.

\begin{figure}[ht]
  \includegraphics{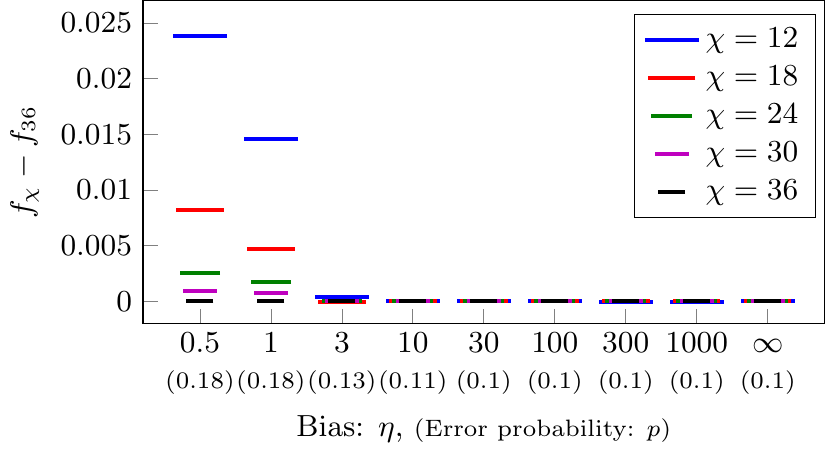}
  \caption{\label{fig:color-convergence}
    Decoder convergence for the distance $d=19$ triangular 6.6.6 color code, represented by shifted logical failure rate $f_\chi-f_{36}$, as a function of $\chi$ at a physical error probability $p$ near the threshold for the given bias $\eta$ .
    Each data point corresponds to 60\,000 runs with identical errors generated across all $\chi$ for a given bias.
  }
\end{figure}

\section{\label{sec:y-decoder}Exact optimal Y-decoder}

Here, we define the exact optimal decoder for pure $Y$ noise that we use in our numerical simulations of Sec.~\ref{sec:co-prime-advantage-pure}.
As mentioned in Sec.~\ref{sec:code-equivalence}, it is possible to decode $Y$ noise on the planar code by treating it as the concatenation of a cycle code and repetition codes and decoding level by level.
However, while efficient, such a decoder is not necessarily optimal.
Also, as mentioned in Sec.~\ref{sec:y-threshold}, the performance of the approximate maximum-likelihood decoder~\cite{Bravyi2014} used in previous studies~\cite{Tuckett2018} is found to saturate with pure $Y$ noise when tuned for efficiency.
Here, we explicitly define an exact maximum-likelihood decoder for the surface code with pure $Y$ noise that is efficient for $j{\times}k$ surface code families with small $\gcd(j,k)$, such as coprime codes, and tractable for moderate-sized square codes.

Consider a surface code with $n$ physical qubits and $m$ independent vertex and plaquette stabilizer generators.
In the case of pure $Y$ noise, the only possible error configurations are $Y$-type Pauli operators, i.e.\ operators consisting only of $Y$ and identity single-qubit Paulis.
Let $\mathcal{P}_Y$ denote the group of $n$-qubit $Y$-type Pauli operators, let $\mathcal{G}_Y$ denote the group of $Y$-type stabilizers, and define the centralizer of $\mathcal{G}_Y$ as $\mathcal{C(G_\mathnormal{Y})} = \{ f \in \mathcal{P}_Y : fg=gf\ \forall\ g \in \mathcal{G}_Y \}$.
If the result of measuring the vertex and plaquette stabilizer generators is given by syndrome $s \in \{0,1\}^m$ and $f_s \in \mathcal{P}_Y$ is some fixed $Y$-type Pauli operator with syndrome $s$ then the set $f_s\mathcal{C(G_\mathnormal{Y})}$ of all $Y$-type Pauli operators with syndrome $s$ is the disjoint union $f_s\mathcal{C(G_\mathnormal{Y})} = f_s\mathcal{G}_Y \cup f_s\overline{L}\mathcal{G}_Y$, where $\overline{L}$ is one of the single class of logical operators possible with pure $Y$ noise.

For a given syndrome $s$ and probability distribution $\pi$ on the Pauli group, the maximum-likelihood decoder for pure $Y$ noise can be implemented by constructing a candidate $Y$-type recovery operator $f_s$ consistent with $s$ and returning $\text{arg max}_f\ \pi(f\mathcal{G}_Y)$ where $f \in \{ f_s, f_s\overline{L} \}$ and $\pi(f\mathcal{G}_Y) = \sum_{g \in \mathcal{G}_Y}{\pi(fg)}$.

On a $j{\times}k$ surface code, the size of the group of $Y$-type stabilizers is $\lvert \mathcal{G}_Y \rvert = c_Y = 2^{g-1}$, where $g = \gcd(j, k)$; see Corollary~\ref{cor:count}.
Therefore, for surface codes with small $g$, such as coprime codes, the $Y$-decoder is efficient, provided that a candidate $Y$-type recovery operator $f_s$, the group of $Y$-type stabilizers $\mathcal{G}_Y$, and logical operator $\overline{L}$ can be constructed efficiently.
In the next two subsections, we describe these constructions.

\subsection{Constructing Y-type stabilizers and logical operators}

The construction of $Y$-type stabilizers and logical operators for a $j{\times}k$ code is illustrated in Fig.~\ref{fig:y-logicals}.
A minimum-weight $Y$-type logical operator is constructed by applying $Y$ operators along a path starting at the top-left corner of the lattice and descending diagonally to the right, reflecting at boundaries, until another corner is encountered from within the lattice.
We construct $Y$-type stabilizers similarly, starting at each of the next $\gcd(j,k)-1$ qubits of the top row and reflecting until the path cycles.
Together, these stabilizers generate the full group of $2^{g-1}$ $Y$-type stabilizers, and combine with the minimum-weight logical operator to give the $2^{g-1}$ $Y$-type logical operators of the $j{\times}k$ code.

\begin{figure}[ht]
  \includegraphics{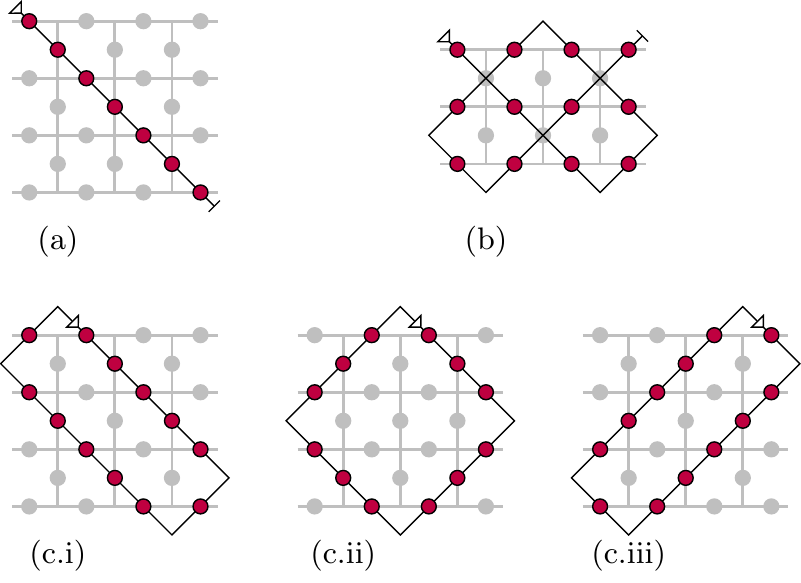}
  \caption{\label{fig:y-logicals}
    Examples of $Y$-type stabilizer and logical operator construction by applying $Y$ operators along the indicated path until a corner is encountered or the path cycles.
    Minimum-weight $Y$-type logical operators (a) and (b) for square $4{\times}4$ and coprime $3{\times}4$ codes, respectively, are constructed by starting at the top-left qubit.
    Generators of the group of $Y$-type stabilizers (c) for the square $4{\times}4$ code are constructed by starting at each of the next $\gcd(j,k)-1=3$ qubits of the top row.
    (For coprime codes, there are no $Y$-type stabilizers other than the identity.)
  }
\end{figure}

\subsection{Constructing candidate Y-type recovery operators}
The construction of a candidate $Y$-type recovery operator, consistent with a given syndrome, depends on whether the code is coprime, square, or neither.

\begin{figure}[ht]
  \includegraphics{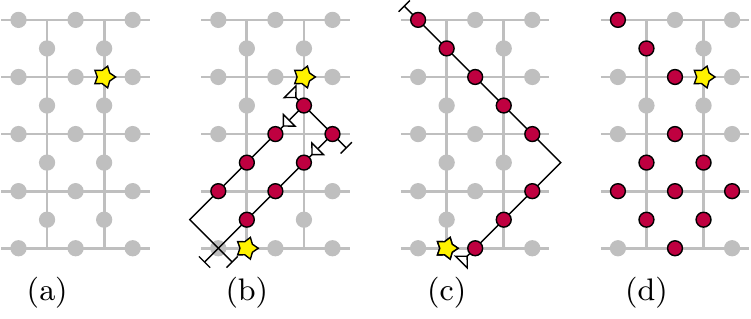}
  \caption{\label{fig:co-prime-destabilizer}
    Example of $Y$-type destabilizer construction for a coprime code.
    (a) Single syndrome location.
    (b) A partial recovery operator is constructed by applying $Y$ operators, from below the syndrome location along a diagonal to any boundary, then from that diagonal along perpendicular diagonals, until the bottom boundary is encountered.
    (c) Residual recovery operators are constructed by applying $Y$ operators, from right of each residual boundary syndrome location along a diagonal away, until a corner is encountered.
    (d) The destabilizer is a product of partial and residual recovery operators.
  }
\end{figure}

\begin{figure}[ht]
  \includegraphics{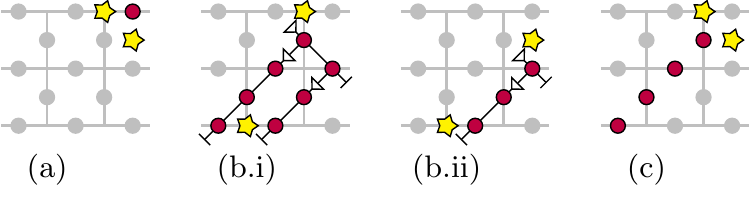}
  \caption{\label{fig:square-recovery}
    Example of candidate $Y$-type recovery operator construction for a square code using partial recovery operators.
    (a) Original error and complete syndrome.
    (b) Partial recovery operators with residual boundary syndrome locations.
    (c) The candidate recovery operator is the product of all partial recovery operators, since residual boundary syndrome locations cancel in the case of square codes.
  }
\end{figure}

For coprime codes, it is possible to construct an operator, consisting only of $Y$ and identity single-qubit Paulis, that anticommutes with any single syndrome location.
We refer to such operators as $Y$-type destabilizers.
Given a complete syndrome, a candidate $Y$-type recovery operator is then simply constructed by taking the product of $Y$-type destabilizers for each syndrome location.
One way to construct $Y$-type destabilizers for coprime codes is illustrated in Fig.~\ref{fig:co-prime-destabilizer}.
For a given syndrome location, a partial recovery operator is constructed by applying seed $Y$ operators along a path starting directly below the syndrome location and descending diagonally to the right until a boundary is encountered; further $Y$ operators are applied along paths descending diagonally to the left of each of these seed $Y$ operators, reflecting at boundaries, until the bottom boundary is encountered.
The partial recovery operator then anticommutes with the original syndrome location and residual syndrome locations on the bottom boundary.
A residual recovery operator is constructed for each residual syndrome location by applying $Y$ operators along a line starting directly to the right of the syndrome location and ascending diagonally to the right, reflecting at boundaries, until a corner is encountered from within the lattice.
The residual recovery operators then anticommute with the residual syndrome locations.
The destabilizer for the original syndrome location is then simply the product of the partial and residual recovery operators.

For square codes, $Y$-type destabilizers do not exist, in general, and, hence, a different approach to constructing a candidate $Y$-type recovery operator must be adopted.
Given a complete syndrome for a square code, a candidate $Y$-type recovery operator can be constructed by taking the product of partial recovery operators for each syndrome location, since the residual boundary syndrome locations cancel in the case of square codes; see Fig.~\ref{fig:square-recovery}.

For surface codes that are neither coprime nor square, a candidate $Y$-type recovery operator is constructed by dividing the lattice into a coprime region and square regions.
Partial recovery operators are constructed for each region leaving residual syndrome locations only on plaquettes between regions.
Residual syndrome locations can then be moved off the lattice using $Y$-type stabilizers on the square regions.

\end{document}